\let\Cite\cite
\def\Fig{FIG.}
\def\PICWIDTH{0.65\textwidth}
\def\currenttime{\hour=\time \divide\hour by 60 \number\hour:%
  \multiply\hour by 60 \minute=\time \global\advance\minute by -\hour%
  \ifnum\minute<10 0\number\minute\else\number\minute\fi}
\newcommand{\Now}{\number\day\ {\ifcase\month\or%
 January\or February\or March\or April\or May\or June\or%
 July\or August\or September\or October\or November\or December\fi}\
\number\year\ at \currenttime}
\newcommand{\E}{\mathrm{e}}
\newcommand{\I}{\mathrm{i}} 
\newcommand{\Real}{\mathop{\mathrm{Re}}}
\newcommand{\Imag}{\mathop{\mathrm{Im}}}
\newcommand{\ket}[1]{{\left|#1\right\rangle}}
\newcommand{\braket}[2]{{\left\langle#1|#2\right\rangle}}
\newcommand{\ketbra}[2]{{\left|#1\right\left\bigl\langle #2\right|}}
\newcommand{\WITH}{\quad\mbox{with}\quad}
\newcommand{\AND}{\quad\mbox{and}\quad}
\newtheorem{thm}{Theorem}[section]
\newtheorem{lem}[thm]{Lemma}
\newtheorem{prop}[thm]{Proposition}
\newtheorem{cor}[thm]{Corollary}
\newtheorem{conjecture}[thm]{Conjecture}
\def\idty{{\leavevmode\rm 1\mkern -5.4mu I}} 
\def\idty{{\mathbbm1}} 
\def\Rl{{\mathbb R}}\def\Cx{{\mathbb C}}
\def\norm #1{\Vert #1\Vert}
\def\brAAket#1#2#3{\langle#1\vert#2\vert#3\rangle}
\def\ket #1{\vert#1\rangle}
\def\ketbra #1#2{{\vert#1\rangle\langle#2\vert}}
\def\kettbra#1{\ketbra{#1}{#1}}
\def\tr{\mathop{\rm tr}\nolimits}
\def\abs#1{\vert#1\vert}
\def\farray#1#2#3#4{\left#1\begin{array}{#2}#3\end{array}\right#4}
\def\Cases#1{\farray\lbrace{cl}{#1}.}
\let\veps\varepsilon
\def\inv{^{-1}}
\let\eless\sqsubseteq  
\let\antipode\blacktriangledown
\newcommand\ph{\varphi}
\def\phhi{\widehat\ph}
\def\GG{{\mathcal G}}
\def\iir{{\mathrm i}r}
\def\Renyi{R{\'e}nyi}
\def\sigQual{\sigma_{\rm eq}}
\def\bic{\zeta}  
\begin{document}

\title{Optimality of entropic uncertainty relations}

\author{Kais Abdelkhalek}
\affiliation{Institut f\"{u}r Theoretische Physik, Leibniz Universit\"{a}t Hannover, Germany}
\author{Ren\'{e} Schwonnek}
\affiliation{Institut f\"{u}r Theoretische Physik, Leibniz Universit\"{a}t Hannover, Germany}
\author{Hans Maassen}
\affiliation{Department of Mathematics, Radboud University, Nijmegen, The Netherlands}
\author{Fabian Furrer}
\affiliation{Department of Physics,  University of Tokyo, Japan}
\author{J\"{o}rg Duhme}
\affiliation{Institut f\"{u}r Theoretische Physik, Leibniz Universit\"{a}t Hannover, Germany}
\author{Philippe Raynal}
\affiliation{Centre for Quantum Technologies, National University of Singapore, Singapore},
\affiliation{University Scholars Programme, National University of Singapore, Singapore}
\author{Berthold-Georg Englert}
\affiliation{Centre for Quantum Technologies, National University of Singapore, Singapore}
\affiliation{Department of Physics, National University of Singapore, Singapore}
\affiliation{MajuLab, CNRS-UNS-NUS-NTU International Joint Unit, UMI 3654, Singapore}
\author{Reinhard F. Werner}
\affiliation{Institut f\"{u}r Theoretische Physik, Leibniz Universit\"{a}t Hannover, Germany}

\begin{abstract}\vskip40pt\noindent
The entropic uncertainty relation proven by Maassen and Uffink for arbitrary pairs of two observables is known to be non-optimal. Here, we call an uncertainty relation optimal, if the lower bound can be attained for any value of either of the corresponding uncertainties. In this work we establish optimal uncertainty relations by characterising the optimal lower bound in scenarios similar to the Maassen-Uffink type. We disprove a conjecture by Englert {\it et al.} and generalise various previous results. However, we are still far from a complete understanding and, based on numerical investigation and analytical results in small dimension, we present a number of conjectures.
\end{abstract}

\keywords{}

\maketitle

\section{Introduction}%

As a characteristic trait, quantum systems possess  properties that are incompatible --- properties that are equally real but mutually exclusive. In a pair of incompatible properties, if we have precise knowledge about one
property, what we know about the other is necessarily imprecise. More generally, we can trade knowledge about one property for knowledge about the other and so know both imperfectly, and quantify the
lack of knowledge by a suitable measure of uncertainty.
Then, the compromises allowed by nature have their mathematical expressions in
the form of \emph{uncertainty relations}, which are inequalities that follow
from the formalism of quantum theory.

The study of uncertainty tradeoffs originated in Heisenberg's pioneering work\cite{Heisenberg1927} of
1927 and was soon brought into a clear mathematical form by Kennard\cite{Kennard1927}.
Weyl gave another early proof\cite{Weyl1928}. He was apparently unaware of Heisenberg's paper and gives credit for the idea to Pauli, who seems to have learned it from Heisenberg in a letter prior to the publication of \Cite{Heisenberg1927}. The modern textbook proof combining the Schwarz inequality with the commutation relations is due to Robertson\cite{Robertson1929}.
In this tradition the ``uncertainty of observable $X$ in the state $\rho$'' is identified with the root of the variance of the probability distribution of the outcomes of an $X$-measurement on particles prepared according to $\rho$, i.e.,
\begin{equation}\label{eq:1.1}
  \delta X=\sqrt{\tr{(\rho X^2)}-\tr(\rho X)^2}\,,
\end{equation}
The key requirement for Heisenberg's  uncertainty relation $\delta Q\,\delta P\geq\hbar/2$ to hold is that these variances are evaluated in the same state. The relation is thus a quantitative expression of the observation that there are no dispersion-free states, and is hence of the type ``preparation uncertainty relation''. This is in contrast to ``measurement uncertainty relations'' which express the feature of quantum mechanics that some observables may not be measured jointly, which also implies that any measurement of one observable $X$ implies a disturbance of the other in the sense that it cannot be inferred from a measurement on the state after an $X$-measurement. This aspect, although more prominent in Heisenberg's paper than the preparation side, was made precise only recently\cite{Busch2014} (also \Cite{ColesFurrer2013,RenesScholz2014}).

In this paper we are interested in preparation uncertainty relations for quantum systems of finite dimension $d$. A standard scenario in which this is of interest is the tradeoff between Welcher-Weg information and interference patterns at a multi\-port interferometer.
In this minimalistic instance of  wave-particle duality\cite{englert2008} one observable would detect particles on each of the internal paths of the interferometer, thus detecting a particle-like property, whereas the detectors at the end pick up wave-like interference. Uncertainty in this situation expresses the physical fact that if we prepare incoming particles so that they all go along the same path, we loose the interference contrast and, conversely, that large interference contrast is only possible when all paths are ``traversed'' with roughly equal probability. Another standard context for finite-dimensional uncertainty relations is quantum information theory, particularly quantum key distribution. Large parts of this theory have been developed in finite dimension, and there are many situations in which the incompatibility as expressed by uncertainty relations plays an important role (e.g. in security proofs\cite{Tomamichel2012} of cryptographic protocols).

What is common to these motivating instances of finite-dimensional uncertainty is that the outcomes of the respective observables are labelled in a completely arbitrary way. However, a variance depends not only on the abstract outcomes and their probabilities, but also on the real numbers we assign to them. For example, by multiplying all these numbers by the same factor we also multiply $\delta X$. Moreover, variance will change if we permute the outcomes, which is as easy to do with beams in optical fibers as with freely re-codable bits of information. Basically motivated by such considerations, Deutsch\cite{Deutsch1983} suggested to use entropies to quantify the (lack of) sharpness of a probability distribution. This led to the famous entropic uncertainty relation established by Maassen and Uffink\cite{Maassen1988}, to which we will refer to as the {\it MU bound}.
It describes the sharpness tradeoff for the outcome distributions $p_X^\rho$ and $p_Y^\rho$ of two observables
$X,Y$ in the same state $\rho$ in terms of their \Renyi\ entropies $H_\alpha$, $H_\beta$ (see \eqref{eq:defentropy}), provided that these parameters satisfy the {\it duality relation}
\begin{equation}\label{eq:duality}
\frac{1}{\alpha}+\frac{1}{\beta}=2 \ .
\end{equation}
When the observables $X$ and $Y$ are given in terms of their eigenbases $\{x_i\}$ and $\{y_j\}$, so that
$p_X^\rho(i)=\brAAket{x_i}\rho{x_i}$ and $p_Y^\rho(j)=\brAAket{y_j}\rho{y_j}$, the MU bound is
\begin{equation}\label{eq:MUbound}
  H_{\alpha}(p_X^\rho)+H_{\beta}(p_Y^\rho)\geq
   -\log \max_{j,k}\abs{\braket{y_k}{x_j}}^2\,.
\end{equation}
The bound becomes zero when the two bases share a vector, and maximal (namely $\log d$) if the bases are mutually unbiased, so that all scalar products $\braket{y_k}{x_j}$ have the same modulus.

An alternative to entropies would again be variances, once one realizes that for defining a variance it is not really necessary to have $\Rl$-valued random variables. It suffices to have outcomes in a metric space $\Omega$ with metric $\Delta$, so that the variance of a probability measure $\mu$ on $\Omega$ becomes
\begin{equation}\label{varianceMetric}
  \mbox{var}(\mu)=\inf_{\eta\in\Omega}\int\mu(d\omega)\ \Delta(\omega,\eta)^2 \,.
\end{equation}
When $\Omega=\{1,\ldots,d\}$ the only permutation invariant metrics are $\Delta(i,j)=c(1-\delta_{ ij})$, and we will just fix the constant $c=1$.
Then
\begin{equation}\label{varianceDiscrete}
  \mbox{var}(p)=\min_{j}\sum_ip(i)\,(1-\delta_{ ij})^2=1-\max_jp(j) \,.
\end{equation}
Up to a rescaling this is the so-called min entropy $H_\infty(p)=-\log \max_j p(j)$.

How then should we write an uncertainty relation in this general context? We will see that it is not wise to fix in advance the functional form of the tradeoff relation between $H_{\alpha}(p_X)$ and $H_{\beta}(p_Y)$. Instead, the best and most intuitive representation of the tradeoff is the diagram of all pairs $(H_{\alpha}(p_X),H_{\beta}(p_Y))$, ranging over all choices of input states $\rho$.
\begin{figure}[h]
	\def\svgwidth{0.95\textwidth}
     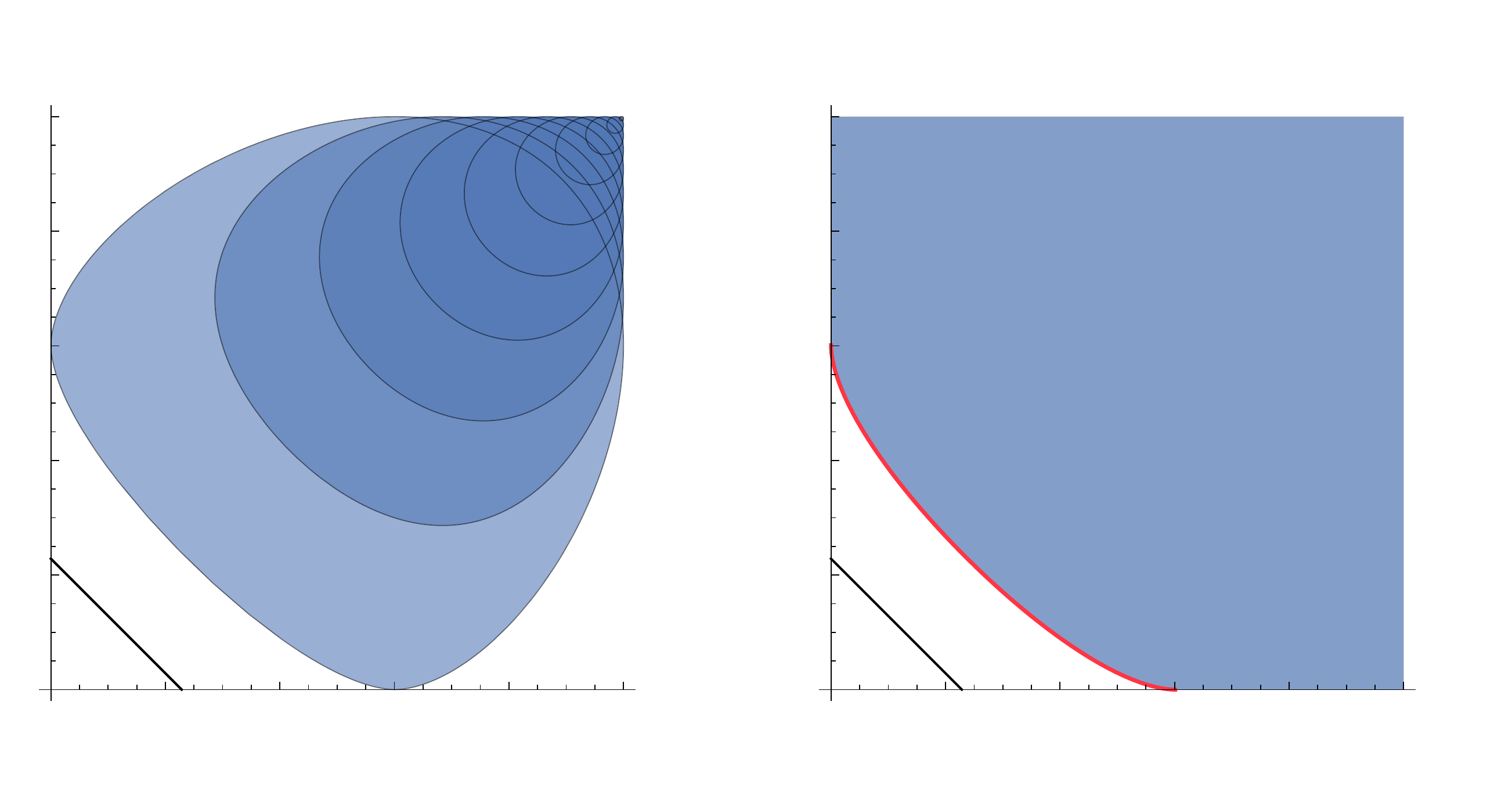
	\caption{Entropy pairs for $d=2$ and the observables $X=\sigma_z$ and $Y=(\sigma_x+\sigma_z)/\sqrt 2$. Left panel: The shaded set gives all pairs $(H(p_X^\rho),H(p_Y^\rho))$. The contours describe the subsets which can be reached by pure states with a fixed admixture of $\rho=\idty/2$. Right panel:
		The shaded set is the ``monotone closure'' of the one on the left (see text). The solid curve represents the optimal bound: For entropy pairs on this bound it is impossible to reduce one entropy without enlarging the other. The thin line closer to the origin is the MU bound. }
	\label{fig:qbits}
\end{figure}
An advantage of this representation is also that it changes in a simple way by a rescaling like the replacement of the variance (\ref{varianceDiscrete}) by $H_\infty$. For qubits ($d=2$), all measures of sharpness are functions of each other, so all such diagrams are equivalent. Figure~\ref{fig:qbits} is drawn for the Shannon entropy $H=H_1$. Some details of the diagram of {\it all} pairs of entropies, shown on the left, are clearly not relevant for the uncertainty tradeoff, in which we ask {\it how small} we can simultaneously make the entropies. For this question it is the lower left corner of the diagram which matters, i.e., the set in the right diagram.
It can be described as adding to any pair of entropies the full closed positive (north-east) quadrant above it. It is completely described by its lower left boundary, consisting of those entropy pairs with the property that for no other state one can have one entropy strictly smaller and the other at least as small. We consider the resulting curve as the complete description of the uncertainty tradeoffs between the entropies involved. Characterising this curve is the aim of this paper.

We will always consider a quantum system in a $d$-dimensional Hilbert space, and consider two projection valued observables with $d$ outcomes. This amounts to the choice of two bases $\{x_i\}$ and $\{y_j\}$, and for the question at hand the choice is completely described by the unitary overlap matrix $U_{ij}=\braket{x_i}{y_j}$ modulo multiplication by diagonal unitary matrices or permutation matrices from either side. In the motivating standard case, closest to the case of position and momentum of continuous variables, the $U$ represents the discrete Fourier transform of either the cyclic group of $n$ elements or, if $n$ is composite, another finite abelian group of order $n$. More generally, we also consider complex Hadamard matrices, i.e., unitary operators such that $\abs{U_{ ij}}=1/\sqrt d$ for all $i,j$. The bases are then called mutually unbiased, and we can think of a multiport interferometer generalizing a $50{:}50$-beam splitter. Such bases also represent complementary pairs of measurements from the informational point of view. However, we will not restrict our study to these special classes of unitary matrices --- several results will hold for arbitrary unitary matrices. For generalized observables (POVMs) or $k$-tuples of observables similar questions can be asked, but we will not consider them in this paper. For the quantification of uncertainty or unsharpness we use the \Renyi\ entropies $H_\alpha$ ($1/2\leq\alpha\leq\infty$), and denote by $H=H_1$  the standard case of the Shannon entropy. Mostly we assume that the \Renyi\ parameters $\alpha$ and $\beta$ used for $X$ and $Y$, respectively, satisfy the duality relation \eqref{eq:duality}. Again, the questions make sense also for other measures, e.g., related to majorization, or for variances, but these will not be considered here. We will also restrict ourselves to state-independent bounds, i.e., to the entropy pairs achievable by arbitrary states. When more is known about the state, for example about further expectation values, the entropy diagram for the subset may be quite different. Thus we do not consider inequalities like the Robertson inequality for variances, where the lower bound depends on the expectation of a commutator.

{\it Outline.} In Sect.~\ref{sec:prelims} we briefly define all the relevant quantities and state our problem in precise mathematical terms. We present a brief review of previous results in Sect.~\ref{sec:previ}. In Sect.~\ref{sec:equality} we provide a characterization of the case of equality in the MU bound and thereby show that the MU bound is not optimal in almost all cases. Our main results are presented in Sect.~\ref{sec:main}. We are not able to completely solve the problem in all its generality. However, we provide strong conjectures (Sect.~\ref{sec:conj}) which, if true, heavily reduce the complexity of the problem.

\section{Preliminaries and Notation}\label{sec:prelims}
For $\alpha\in[\frac12,\infty]$ the {\it $\alpha$-\Renyi\ entropy} of a probability distribution $p\in(0,1)^d$ is defined by
\begin{equation}\label{eq:defentropy}
H_{\alpha}(p)=\Cases{\frac{1}{1-\alpha}\log \sum_{j=1}^{d} p(j)^{\alpha} &\mbox{ if\ }\alpha\neq1,\infty\\
                     -\sum_{j=1}^{d} p(j)\log p(j) &\mbox{ if\ }\alpha=1 \mbox{\Huge\strut}\\
                     -\log \max_j p(j)  &\mbox{ if\ }\alpha=\infty\,.    \mbox{\Huge\strut}}
\end{equation}
The logarithms can be taken in any base (as long as it is always the same base). We follow the information theory convention of using base\,-$2$ logarithms, although base $d$ would also be natural in our context, as it would normalize the range to $0\leq H_\alpha(p)\leq \log d=1$. Monotone functions of the entropies tell the same story. In this sense we also cover ``Tsallis entropies'' $T_\alpha(p)=(1-\alpha)\inv(1-\sum_jp(j)^\alpha)$.

Each entropy diagram will be drawn for a fixed choice of observables (i.e., bases) $X,Y$ and values of the \Renyi\ parameters $\alpha,\beta$, so that we consider a map $f$ from the state space to $\Rl_+^2$ given by
\begin{equation}\label{deff}
  f(\rho)=\bigl(f_1(\rho),f_2(\rho)\bigr) =\bigl(H_\alpha(p_X^\rho),H_\beta(p_Y^\rho)\bigr)\,.
\end{equation}
For any choice we can define the order relation $\eless$ on the state space, so that $\rho\eless\rho'$ stands for ``$f_1(\rho)\leq f_1(\rho')$ and $f_2(\rho)\leq f_2(\rho')$''. The {\it uncertainty diagram} is the monotone closure of the range $\{f(\rho)\}$, i.e., it is the set $S$ containing precisely the pairs $(h_1,h_2)\in S$ for which there is a state $\rho$ with $f_i(\rho)\leq h_i$ for $i=1,2$ (compare \Fig~\ref{fig:qbits}). We call a state $\rho$ {\it optimal} if $\rho'\eless\rho$ implies $\rho\eless\rho'$, and hence $f(\rho)=f(\rho')$. The corresponding {\it optimal points} in the entropy plane are characterized by the property that the uncertainty diagram contains no points to their south-west. We call the set of all optimal points the curve of minimal entropies or the {\it optimal bound}. Therefore the optimal bound corresponds to a function $\gamma: (0,\log d) \rightarrow (0,\log d)$ for which
\begin{equation}\label{eq:optimalbounddef}
H_\alpha(p_X^\rho)\geq \gamma\bigl(H_\beta(p_Y^\rho)\bigr)
\end{equation}
with the property that equality can be obtained for all possible values of $H_\beta(p_Y^\rho)$.

If for some state the MU bound is saturated we call this state an {\it equality state}. The corresponding point in the entropy plane is an {\it equality point}. If an equality point exists we call the MU bound {\it tight}. The MU bound is said to be {\it optimal}, whenever it completely coincides with the optimal bound.

A Hadamard matrix is a unitary matrix $U$ with elements satisfying $|U_{jk}|=1/\sqrt{d}$. The Fourier matrix is the matrix $U^F$ with components satisfying
\begin{equation}\label{eq:Fourier}
U^F_{jk}=\frac{1}{\sqrt{d}} \ \E^{\frac{2\pi\I}{d}jk} \ , \quad j,k=0,...,d-1\;.
\end{equation}
The Fourier matrix is hence a special instance of a Hadamard matrix.
This example generalizes to the wider setting of {\it finite abelian groups}, rather than just the cyclic group of $d$ elements as in \eqref{eq:Fourier}. To this end we consider the index set $J$ for the first matrix index of $U$ to equipped with a commutative binary operation ``$+$'' turning it into a group. The second index is similarly labelled by the so-called dual group, denoted here by $K$. A symmetric way to express the relation between these groups is via the canonical bicharacter of the pair $(J,K)$, which is a function  $\bic:J\times K\to\Cx$. It has the property that the for every $k$ the function $j\mapsto\bic(j,k)$ is a homomorphism from $J$ to the complex numbers with modulus $1$, and that, conversely every such homomorphism is of this form for some unique $k\in K$. Moreover, the same is true vice versa for the functions $k\mapsto\bic(j,k)$ with fixed $j\in J$. The Fourier matrix in this case is $U_{jk}=d^{-1/2} \bic(j,k)$, where $d=\abs J=\abs K$. It is unitary and obviously a Hadamard matrix. When $d$ is not a prime there are several non-isomorphic abelian groups of order $d$.

\section{Previous results}\label{sec:previ}
There has been considerable work to generalize and improve the MU bound, e.g. by using more general entropy functions \cite{berta2010} or more than two observables \Cite{Ivanovic1992,Sanchez95,BallesterWehner2007,Wu2009} (see also \Cite{WehnerWinter2010} for a review on entropic uncertainty relations). Most efforts, however, considered only the sum of the entropies (e.g.\ \Cite{Adamczak2014,Puchala2013,Puchala2015,Rudnicki2014,krishna2001,Rastegin2010,ColesPiani2014,Zozor2014}), thereby already fixing the functional form of the tradeoff relation and not capturing all the information contained in the entropy diagram.

In this work we are instead interested in characterising the curve of minimal entropies which we consider the optimal lower bound on the two entropies involved. There are, to the best of our knowledge, only very few results in the literature about the curve of minimal entropies in the finite-dimensional setting. In \Cite{sanches98,ghirardi2003} the authors note that the MU bound is not optimal in the simplest case of dimension $d=2$ and compute the optimal bound for general unitary operators, but restricted to the Shannon case $\alpha=\beta=1$. In \Cite{englert2008} a conjecture about the entropy minimizing states is presented. We will see that this conjecture needs improvement.

\section{Equality in the Maassen-Uffink uncertainty relation}
\label{sec:equality}
The MU bound provides a lower bound on the sum of two \Renyi\ entropies that satisfy the duality relation \eqref{eq:duality}. When characterising the curve of minimal entropies, it is natural to first investigate the case of equality in the MU bound.
If the unitary operator linking the observables is a Hadamard matrix, it is clear that the MU bound is tight. Indeed, any eigenvector of the observables, $\{x_i\}$ or $\{y_i\}$, is an equality state. But can one also find equality points for arbitrary unitary operators?

There already exist some results in the literature discussing this question, most importantly \Cite{maassen2007} and \Cite{coles2011}. In the latter work the authors show the link between the two concepts of uncertainty principle and data processing inequality. Using this link the characterisation of all states that saturate the uncertainty relation reduces to the question of characterising all states for which the application of a certain channel does not imply loss of information. Employing this technique the authors can characterize all quantum states that saturate the MU bound in the restricted setting of observables related by Fourier transformation and Shannon entropies.
A more general result was obtained in \Cite{maassen2007}, namely a complete characterisation of all equality points in the special case $\alpha=\beta=1$, i.e. for Shannon entropies. Here we present an alternative proof of the uncertainty relation which allows us to generalize these from Shannon entropies to the case of arbitrary pairs of \Renyi\ entropies that satisfy the duality relation.

The main result of this section is the following Theorem. In its formulation the ``support'' of a probability distribution is the set of points with non-zero probability, and $\abs M$ denotes the number of elements of a set $M$.

\begin{thm}\label{thm:equality}
Let $\alpha,\beta>\frac{1}{2}$ be such that ${1/\alpha}+{1/\beta}=2$, and let $X,Y$ be bases with $c=\max_{j,k}\abs{\braket{y_k}{x_j}}$.
Let $\rho$ be a state, and denote by $s_X$ and $s_Y$ the supports of the distributions $p_X^{\rho}$ and $p_Y^{\rho}$.
Then equality in the MU uncertainty relation
\begin{equation}\label{EntrUnc}
H_{\alpha}(p_X^{\rho})+H_{\beta}(p_Y^{\rho})\ge  \log \frac1{c^2}
\end{equation}
is reached if and only if $\rho=\kettbra\psi$ is a pure state and, possibly after multiplying the basis vectors $x_i,y_j$ with suitable phases, the following condition holds:
\begin{equation}\label{psieq}
  {\braket{x_i}\psi}= |s_X|^{-1/2}\ ,\quad {\braket{y_j}\psi}= |s_Y|^{-1/2}, \AND
  \braket{y_j}{x_i}=c \quad\mbox{for}\ i\in s_X\ \mbox{and}\ j\in s_Y.
\end{equation}
Moreover,
\begin{equation}\label{eq:easyCheckCriterion}
  \abs{s_X}\, \abs{s_Y}= \frac1{c^2}.
\end{equation}
\end{thm}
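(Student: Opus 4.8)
The plan is to translate equality in \eqref{EntrUnc} into the equality case of a sharp norm inequality and then to extract \eqref{psieq} from the equality analysis of the complex-interpolation proof of the MU bound. Write $V_{ji}=\braket{y_j}{x_i}$ for the overlap matrix; for a normalised $\psi$ put $a_i=\braket{x_i}\psi$ and $b_j=\braket{y_j}\psi$, so that $b=Va$, $\|a\|_2=\|b\|_2=1$, $p_X^\psi(i)=\abs{a_i}^2$, and $p_Y^\psi(j)=\abs{b_j}^2$. Taking without loss of generality $\alpha\le1\le\beta$ and setting $p=2\alpha\in(1,2)$, $q=2\beta\in(2,\infty)$, the duality relation \eqref{eq:duality} is exactly the statement that $p$ and $q$ are conjugate, and a direct computation from \eqref{eq:defentropy} turns \eqref{EntrUnc}, for a pure state, into the norm inequality $\|Va\|_q\le c^{\,2/p-1}\|a\|_p$, the two equalities corresponding. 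This estimate is the Riesz--Thorin interpolant of the elementary bounds $\|V\|_{2\to2}=1$ (unitarity) and $\|V\|_{1\to\infty}=c$; the naive real-interpolation bound is too weak to reproduce it, so a genuine complex-analytic equality analysis is unavoidable. The symmetric endpoint $\alpha=\beta=1$, where $p=q=2$ and the estimate degenerates to unitarity, is the known Shannon case and is recovered by the derivative form of the same argument.

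For the core step I would introduce the analytic family $F(z)=\langle v(z),Va(z)\rangle$ on the strip $0\le\Re z\le1$, with $a_i(z)=\abs{a_i}^{\,p(1-z/2)}\E^{\I\phi_i}$ and $v_j(z)$ the analogous family built from a dual vector $v$ of $b$, normalised so that $\abs{F}\le c$ on $\Re z=0$, $\abs{F}\le1$ on $\Re z=1$, and $F(\theta)=\langle v,Va\rangle$ at $\theta=2-2/p$. Taking the supremum over $v$ reproduces $\|Va\|_q\le c^{\,2/p-1}\|a\|_p$, and equality in \eqref{EntrUnc} means $\abs{F(\theta)}=c^{\,2/p-1}=c^{\,1-\theta}$ at this interior point. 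The Hadamard three-lines lemma then forces $\abs{F}$ to equal its harmonic majorant on the whole strip, hence equality in the boundary estimates for \emph{all} $\Im z$. On $\Re z=1$ this amounts to Cauchy--Schwarz together with unitarity and is automatic; on $\Re z=0$ it is simultaneous equality in the $\ell^1$--$\ell^\infty$ Hölder step and in $\|Va(z)\|_\infty=c\|a(z)\|_1$, which forces $\abs{V_{ji}}=c$ for $i\in s_X$, $j\in s_Y$ and phase alignment of the products $V_{ji}a_i(z)$. The decisive observation is that $a_i(z)$ carries an $\Im z$-dependent phase proportional to $\log\abs{a_i}$: for the alignment to survive at every $\Im z$, the numbers $\log\abs{a_i}$ must be constant over $i\in s_X$, so $p_X^\psi$ is uniform on its support; the analogous condition on $v(z)$, whose moduli are powers of $\abs{b_j}$, makes $p_Y^\psi$ uniform on $s_Y$. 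Absorbing the leftover alignment phases into the basis vectors gives precisely \eqref{psieq}.

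It remains to establish \eqref{eq:easyCheckCriterion}, the converse, and purity. Because \eqref{psieq} makes $p_X^\psi$ and $p_Y^\psi$ uniform on their supports, one has $H_\alpha(p_X^\psi)=\log\abs{s_X}$ and $H_\beta(p_Y^\psi)=\log\abs{s_Y}$ independently of $\alpha,\beta$. Equating their sum to the MU value $\log(1/c^2)$ gives $\abs{s_X}\,\abs{s_Y}=1/c^2$ (the same identity drops out of normalising $b=Va$ under \eqref{psieq}); conversely, for any $\psi$ obeying \eqref{psieq} these same entropy values sum to $\log(1/c^2)$, so \eqref{EntrUnc} holds with equality. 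For purity, $\rho\mapsto p_X^\rho$ is affine and $H_\alpha$ is strictly concave on the simplex for $\alpha\in(\tfrac12,1]$, so the left-hand side of \eqref{EntrUnc} is a concave functional of $\rho$; a mixed equality state $\rho=\sum_k\lambda_k\kettbra{\psi_k}$ therefore decomposes into pure equality states $\psi_k$ sharing a common $p_X^{\psi_k}$. Each then satisfies \eqref{psieq} with the same supports and uniform moduli, and the demand that $Va$ again be uniform of modulus $\abs{s_Y}^{-1/2}$ forces $\bigl|\sum_{i\in s_X}\E^{\I\phi_{k,i}}\bigr|=\abs{s_X}$; equality in the triangle inequality equalises the phases, so the $\psi_k$ agree up to a global phase and $\rho$ is pure.

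I expect the main obstacle to be the equality bookkeeping in the three-lines step: turning ``equality in the boundary estimates for all $\Im z$'' into the pointwise flatness and phase conditions, while correctly tying the argmax sets on the edges to the supports $s_X,s_Y$ and keeping track of the zeros of $a$ and $b$ and of the basis-phase gauge freedom. The reduction to pure states and the converse are by comparison routine.
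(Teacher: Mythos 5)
Your architecture coincides with the paper's own proof: the analytic family $F(z)=\langle v(z),Va(z)\rangle$ on the strip $0\le\Real z\le1$ is, up to the affine change of variable $z\mapsto 2-z$ and absorbing the normalisation into the function (the paper's prefactor $c^{1-z}\lambda^{z}$), exactly the paper's ``magic function'', with the same two endpoint estimates (Cauchy--Schwarz/unitarity on one line, triangle inequality plus $\abs{V_{ji}}\le c$ on the other), the same interior-point maximum-modulus/Phragm\'en--Lindel\"of step for $\alpha<1$, and essentially the same purity and converse arguments. Your mechanism for extracting uniformity of the distributions differs in a minor way: you use the $\Imag z$-dependent phases, proportional to $\log\abs{a_i}$, and demand alignment along the whole boundary line, whereas the paper first fixes the basis phases so that all coefficients are positive, deduces $U_{ij}=c$ on the support block, and then gets constancy of the coefficients from $b=Va$; both mechanisms are valid and interchangeable.

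The genuine gap is the Shannon case $\alpha=\beta=1$, which your plan does not actually cover. There $p=2$ and your evaluation point $\theta=2-2/p=1$ lies \emph{on} the boundary line $\Real z=1$, where $\abs{F}\le1$ holds automatically and in fact $F(1)=1$ for \emph{every} state by the normalisation; equality in \eqref{EntrUnc} is encoded not in $\abs{F}$ attaining its majorant at an interior point but in the vanishing of the derivative $F'(1)$. The three-lines mechanism you invoke (``equality \ldots at this interior point'' forcing $\abs F$ to equal its harmonic majorant) simply does not apply at $\theta=1$, and ``the derivative form of the same argument'' yields the inequality but not the rigidity: to conclude that $F$ is constant from $\abs F\le1$, $F(1)=1$, $F'(1)=0$ one needs a boundary-point maximum principle --- the paper invokes Hopf's lemma for precisely this step. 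Without that ingredient your equality characterisation is established only for $\alpha\neq1$. A secondary caveat: in the purity step you declare the left-hand side of \eqref{EntrUnc} concave in $\rho$ while justifying only the strict concavity of $H_\alpha$ for $\alpha\in(\tfrac12,1]$; concavity of the $H_\beta$ term with $\beta\ge1$ is also needed for that decomposition argument (the paper asserts concavity of both entropies, and for the Shannon case additionally offers the strengthened inequality \eqref{berta} as an independent route to purity).
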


\begin{proof}
We assume first that  $\rho=\kettbra\psi$ is pure, and will show that this choice is even necessary at the end of the proof. We fix $\psi$ from now on, and choose phases for the basis elements so that, for $i\in s_X$, $j\in s_Y$ we have
\begin{equation}\label{psipos}
  \ph_i=\braket{x_i}\psi>0\AND \phhi_j=\braket{y_j}\psi>0.
\end{equation}
Note that this will change neither $c$ nor the probability distributions. Furthermore, we assume without loss of generality that $\alpha\leq\beta$. We usually eliminate $\beta$ by the duality relation, so the basic parameter to choose is $\alpha$ with $1/2<\alpha\leq1$.

Our proof is inspired by interpolation theory, and involves the application of the maximum principle to a certain analytic ``magic function'' $F$. We do not pretend that finding this function is straightforward, since we also came by it in several stages of generalization and simplification. We define
\begin{eqnarray}\label{magic}
  F(z)&=& c^{1-z}\lambda^{z}\sum_{i,j\in s} \ph_i^{\alpha z}\
              \braket{x_i}{y_j}\ \phhi_j^{\beta z}\\ \WITH
  \lambda&=& \bigl(\norm{\ph^\alpha}_2\,\norm{\phhi^\beta}_2\bigl)^{-1},
\end{eqnarray}
\piccaption{
\label{fig:strip} Domain of $F$ in the complex plane
}
\parpic(0.5\textwidth,7cm)(1.2cm,7cm)[r]{
\def\svgwidth{0.45\textwidth}
\begingroup%
  \makeatletter%
  \providecommand\color[2][]{%
    \errmessage{(Inkscape) Color is used for the text in Inkscape, but the package 'color.sty' is not loaded}%
    \renewcommand\color[2][]{}%
  }%
  \providecommand\transparent[1]{%
    \errmessage{(Inkscape) Transparency is used (non-zero) for the text in Inkscape, but the package 'transparent.sty' is not loaded}%
    \renewcommand\transparent[1]{}%
  }%
  \providecommand\rotatebox[2]{#2}%
  \ifx\svgwidth\undefined%
    \setlength{\unitlength}{216.78515625bp}%
    \ifx\svgscale\undefined%
      \relax%
    \else%
      \setlength{\unitlength}{\unitlength * \real{\svgscale}}%
    \fi%
  \else%
    \setlength{\unitlength}{\svgwidth}%
  \fi%
  \global\let\svgwidth\undefined%
  \global\let\svgscale\undefined%
  \makeatother%
  \begin{picture}(1,0.92085569)%
    \put(0,0){\includegraphics[width=\unitlength]{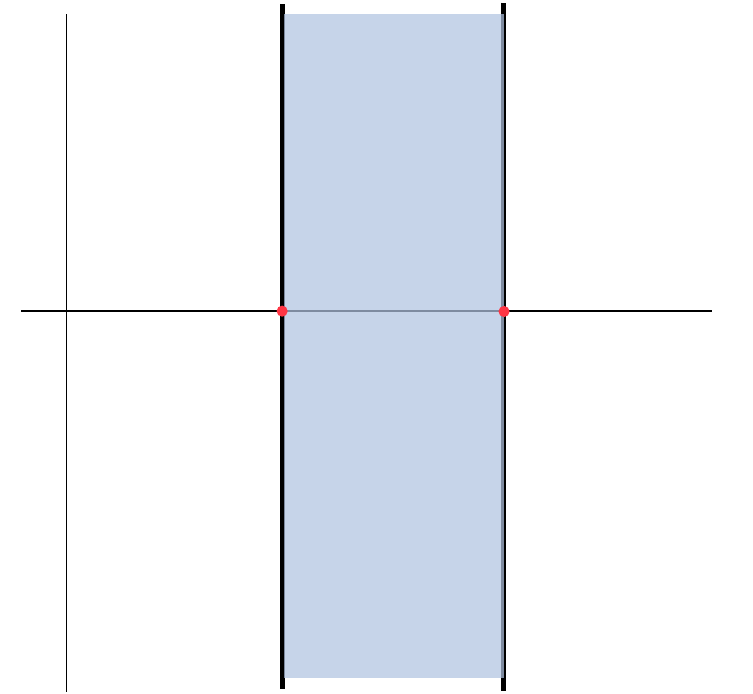}}%
    \put(0.49201546,0.71491051){\color[rgb]{0,0,0}\makebox(0,0)[lb]{\smash{$\mathcal{G}$}}}%
    \put(0.05333619,0.42907551){\color[rgb]{0,0,0}\makebox(0,0)[lb]{\smash{$0$}}}%
    \put(0.3301079,0.42907551){\color[rgb]{0,0,0}\makebox(0,0)[lb]{\smash{$1$}}}%
    \put(0.6068796,0.42907551){\color[rgb]{0,0,0}\makebox(0,0)[lb]{\smash{$2$}}}%
    \put(-0.05737249,0.83500733){\color[rgb]{0,0,0}\makebox(0,0)[lb]{\smash{$\Imag(z )$}}}%
    \put(0.82829701,0.44752692){\color[rgb]{0,0,0}\makebox(0,0)[lb]{\smash{$\Real (z )$}}}%
  \end{picture}%
\endgroup%

}
where ``$i,j\in s$'' is short hand for $i\in s_X$ and $j\in s_Y$, and $\ph^\alpha$ is the componentwise power of $\ph$, so that
\begin{equation}\label{normphmu}
  \norm{\ph^\alpha}_2^2=\sum_i \ph_i^{2\alpha},
\end{equation}
and similarly for $\phhi$. The domain $\GG$ on which this function is analyzed is the strip
\begin{equation}\label{strip}
  \GG=\{z\in\Cx\,|\,1\leq\Real z\leq2\},
\end{equation}
which is also depicted in \Fig~\ref{fig:strip}. Now since the sum \eqref{magic} is finite and $\abs{r^{\alpha z}}=r^{\alpha \Real z}$ is bounded on $\GG$ for every $r>0$, $F$ is also bounded on $\GG$, and the restriction of an entire analytic function. We claim that it is bounded in absolute value by $1$. We estimate this separately for the two boundary lines. That is, for $r\in\Rl$ we have, with $U_{ij}=\braket{x_i}{y_j}$

\begin{eqnarray}
  \abs{F(1+\iir)}&=&\lambda\left|\sum_{i,j\in s}
                     \ph_i^{\alpha (1+\iir)}\ U_{ij}\phhi_j^{\beta (1+\iir)}\right|\nonumber\\
               &=&\lambda\left|\brAAket{\ph^{\alpha (1-\iir)}}{U}{\phhi^{\beta (1+\iir)}} \right|\nonumber\\
               &\leq& \lambda \norm{\ph^{\alpha (1-\iir)}}_2 \norm{\phhi^{\beta (1+\iir)}}_2 \nonumber\\
               &=& \lambda \norm{\ph^{\alpha}}_2\, \norm{\phhi^{\beta}}_2
               =1.  \label{F1plus}
\end{eqnarray}
On the other hand,
\begin{eqnarray}
  \abs{F(2+\iir)}&=&c^{-1}\lambda^{2}\left|\sum_{i,j\in s}
                     \ph_i^{\alpha (2+\iir)}\ U_{ij}\phhi_j^{\beta (2+\iir)}\right|\nonumber\\
               &\leq& \lambda^{2}\sum_{i,j\in s}
                     \ph_i^{2}\ \left|U_{ij}/c\right|\ \phhi_j^{2\beta} \label{F2plus1}\\
               &\leq& \lambda^{2}\sum_{i,j\in s}
                     \ph_i^{2\alpha}\  \phhi_j^{2\beta} \label{F2plus2}\\
               &=& \lambda^{2} \norm{\ph^{\alpha}}_2^2\, \norm{\phhi^{\beta}}_2^2
               =1. \label{F2plus}
\end{eqnarray}
Hence, by the maximum principle, $\abs{F(z)}\leq1$ for all $z\in\GG$.

In order to relate this to entropies we consider the special value $z=1/\alpha$, which always lies in the strip, but for $\alpha=1$ is a boundary point. We get
\begin{eqnarray}
  F\left(\frac1\alpha\right)
          &=&c^{1-1/\alpha}\lambda^{1/\alpha} \sum_{ij\in s}
                    \ph_i\ U_{ij}\phhi_j^{\beta /\alpha}\nonumber\\
          &=& c^{1-1/ \alpha}\lambda^{1/\alpha}\sum_j\phhi_j^{1+\beta / \alpha} \label{argh2}     \\
          &=& c^{1-1/\alpha} \bigl(\norm{\ph^\alpha}_2^{-1/\alpha}\,\norm{\phhi^\beta}_2^{-1/\alpha}\bigr)
                    \norm{\phhi^\beta}_2^2  \label{argh3} \\
          &=& c^{1-1/\alpha} \norm{\ph^\alpha}_2^{-1/\alpha}\,\norm{\phhi^\beta}_2^{1/\beta},
\end{eqnarray}
where at \eqref{argh2} we used that $\sum_i\ph_iU_{ij}=\phhi_j$, and at \eqref{argh3} the definition of $\lambda$ and duality of $\alpha$ and $\beta$. For taking the logarithm of this expression we use that
\begin{eqnarray}
  \log\bigl(\norm{\ph^\alpha}_2^{-1/\alpha}\bigr)&=&-\frac{1-\alpha}{2\alpha}H_\alpha(\ph^2)
  \nonumber\\\AND
  \log\bigl(\norm{\phhi^\beta}_2^{1/\beta}\bigr)&=&\frac{1-\beta}{2\beta}H_\beta(\phhi^2)
                           =H_\beta(\phhi^2)
\end{eqnarray}
and get, equivalently to $F(1/\alpha)\leq1$, the inequality
\begin{equation}\label{logF}
  \log F\left(\frac1\alpha\right)= -\frac{1-\alpha}{2\alpha}
                        \Bigl(\log(c^2)+H_\alpha(\ph^2)+ H_\beta(\phhi^2)\Bigr) \leq0.
\end{equation}
For $\alpha\neq1$ we cancel the common factor and get the MU inequality. For $\alpha=1$ we always get $F(1)=1$, and the MU inequality is obtained by taking the limit $\alpha\to1$. However, it is better to express it instead by the derivative of $F$. For $\alpha=\beta=1$ we get
\begin{equation}\label{Fprime1}
  F'(1)=-\log c -\frac12 H_1(\ph^2)-\frac12 H_1(\phhi^2)\leq0,
\end{equation}
because for small $\veps$ we must have $F(1+\veps)\leq1$.

The advantage of this derivation of the MU inequality is that we have powerful characterizations of the equality case. So suppose that equality holds in the MU inequality.  Then for $\alpha<1$ this means that $F$ attains its maximum modulus $1$ at the interior point $1/\alpha$ of the strip $\GG$, and the Phragm\'en-Lindel\"of Theorem\cite{phragment} tells us that $F=1$ is the constant function. For $\alpha=1$ we need a variant of the maximum principle due to Hopf\cite{hopf} (see, e.g. Thm.~2.7 in \Cite{hopf2}), saying precisely that if the maximum is attained at the boundary with vanishing derivative we once again must have a constant function. In either case we conclude that $F(z)=1$ for all $z\in\GG$.

With this information we can go back to the above estimates for \eqref{F2plus}, which must now be tight. The first step, the triangle inequality \eqref{F2plus1}, is tight if all terms in the sum have the same argument, so up to a common phase the $U_{ij}$ for $i\in s_X$ and $j\in s_Y$ must be positive. With the phase convention \eqref{psipos} this means $U_{ij}>0$ for all $i,j$ in the supports. The second estimate \eqref{F2plus2} is only tight when all $U_{ij}$ also have the maximum allowed modulus $c$. Hence $U_{ij}=c$. If we consider $U$ as an operator on vectors with support $s_Y$ it thus maps to constant functions, so $\ph$ must be constant on $s_X$. By the same token $\phhi$ must be constant on $s_Y$. Taking into account the normalizations we get all assertions of the theorem in the pure case $\rho=\kettbra\psi$.

It remains to show that all equality states must be pure.
So let $\psi$ now be any unit vector in the support of $\rho$ and $\sigma=\kettbra\psi$. Then we can write $\rho=\lambda\sigma+(1-\lambda)\rho'$ with $\lambda>0$, $\rho'$ some other state, and similar convex relationships for the probability distributions. By concavity of the entropies, $\sigma$ must also be an equality state. Moreover, by strict concavity, $\sigma$ and $\rho$  must have the same distributions  $p_X^\sigma=p_X^\rho$ and $p_Y^\sigma=p_Y^\rho$, and hence the same supports $s_X,s_Y$. Going through the proof for the pure equality state $\kettbra\psi$, and in particular adopting the phase conventions made for $\psi$ we find that $U_{ij}=c$ for all $i\in s_X$ and $j\in s_Y$. But then, if we apply $U$ to any other unit vector $\psi'$ in the support of $\rho$ we find that $U\psi'$ is constant on its support $s_Y$. Hence $\psi'$ equals $\psi$ up to a phase, the support of $\rho$ is one-dimensional, and $\rho$ must be pure.

An alternative proof of the necessity of purity, at least for the Shannon case $\alpha=\beta=1$, is via inequality\cite{berta2010}
\begin{equation}\label{berta}
 H(p_X^\rho)+H(p_Y^\rho)\geq
   \log\frac1{c^2}+ H(\rho).
\end{equation}
Clearly, for equality states the correction term, the von Neumann entropy  $H(\rho)$, has to vanish, i.e., the state must be pure.
\end{proof}

An immediate consequence of Theorem~\ref{thm:equality} is that for most overlap matrices no equality states exist, because $1/c^2$ is not an integer. Since the rows of a unitary matrix must be normalized, this integer is at most $d$, in which case we must have a Hadamard matrix. When $1/c^2<d$ one can build examples with equality by first solving a unitary matrix completion problem, starting from the known $s_x\times s_Y$ block. One then has to modify the matrix by unitary rotations on the complementary blocks so that all matrix elements become $\leq c$. The lowest-dimensional example is $2=1/c^2<d=3$, and the overlap matrix
\begin{equation}
U=\left[\begin{array}{rrr}
       a&       a&   \ 0\\
       b&      -b&   \ a\\
       -b&      b&   \ a
\end{array}\right]
\WITH a=\frac1{\sqrt2}\AND b={\frac12}.
\end{equation}
Some higher-dimensional examples can be generated by replacing the matrix elements $a$ and $b$ by $aU_1$ and $bU_2$, where $U_1,U_2$ are any Hadamard matrices of the same dimension.

By definition, Hadamard matrices have $d$ orthogonal equality states with supports $(\abs{s_X},\abs{s_Y})=(1,d)$ and $(d,1)$, respectively. In prime dimension this is clearly the only possibility. However, even if the dimension is composite there may be no more than this, as the example\cite{HadCat}
\begin{equation}
C_6=\frac{1}{\sqrt{6}}\left[\begin{array}{rrrrrr}
       1&       1&       1&       1&       1&     1\\
       1&      -1&      -\eta&    -\eta^2&     \eta^2&     \eta\\
       1& -\eta^{-1}&       1&     \eta^2&    -\eta^3&   \eta^2\\
       1& -\eta^{-2}&  \eta^{-2}&      -1&     \eta^2&  -\eta^2\\
       1&  \eta^{-2}& -\eta^{-3}&  \eta^{-2}&       1&    -\eta\\
       1&  \eta^{-1}&  \eta^{-2}& -\eta^{-2}& -\eta^{-1}&    -1\\
\end{array}\right]
\end{equation}
with $\eta=\frac{1-\sqrt{3}}{2}+\I\sqrt{\frac{\sqrt{3}}{2}}$, shows. Here one can mechanically check that none of the 300  $3\times2$-submatrices has the property that all elements become equal after multiplication of rows and columns with suitable phases. Hence from Theorem~\ref{thm:equality} it is clear that the point $(\log 3,\log2)$ on the MU-line is not accessible for any state.

\begin{figure}[hb]\centering
	\includegraphics[width=\PICWIDTH]{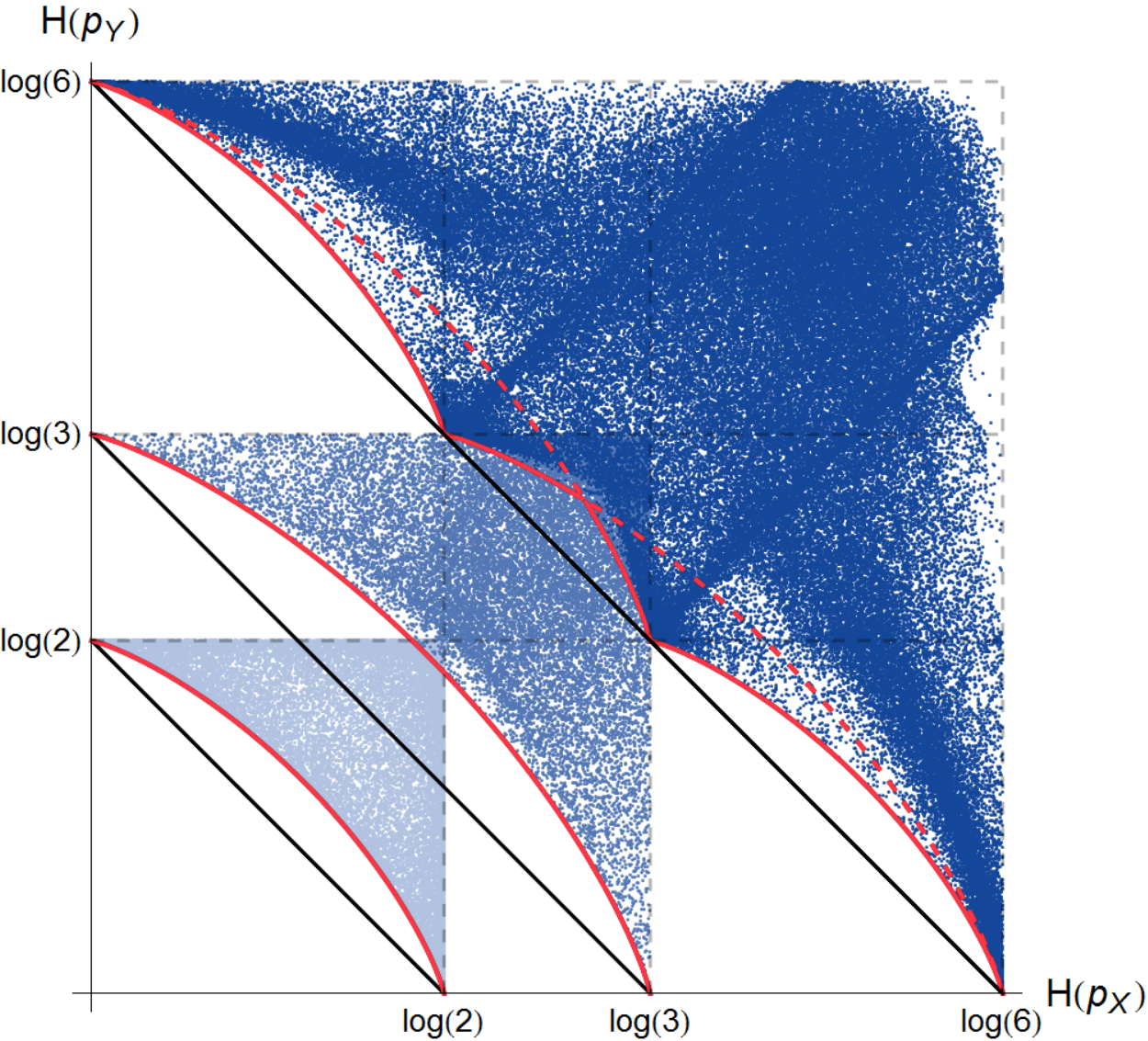}
	\caption{ \label{fig:out236} Numerical sampling of the entropy diagram for dimensions $d=2$ (light shading), $d=3$ (medium shading) and $d=6$ (dark shading) for Fourier-related observables and Shannon entropies. By Theorem \ref{thm:equality} the number of equality states corresponds to the number of divisors of the respective dimension. The optimal bounds (solid curves) are obtained by applying Conjecture \ref{conj:productNaff} and Conjecture \ref{conj:Fourier} presented in Sect.~\ref{sec:conj}.}
\end{figure}

In the special case of a Fourier matrix (see the end of Sect.~\ref{sec:prelims} for notations) we can get a complete description of the equality cases from Theorem~\ref{thm:equality}, as has been observed in Theorem~4.(1) of \Cite{coles2011} for the special case of a cyclic group. We will do the same for an arbitrary finite abelian group $J$. It turns out that the equality states are then directly linked to the subgroups  of $J$ and its dual $K$. The subgroups always come in pairs, i.e., when $L\subset J$ is a subgroup, so is its annihilator\cite{rudin}
\begin{equation}\label{annihi}
  L^\perp=\{k\in K\,|\, \forall j\in L\ \bic(j,k)=1\} \subset K.
\end{equation}
The basic result about annihilators is that $(L^\perp)^\perp=L$ for every subgroup, so there is a ono-to-one correspondence between the subgroups of $J$ and $K$, under which
$L_1\subset L_2\Leftrightarrow L_1^\perp\supset L_2^\perp$. For any non-empty set $L\subset J$,  we denote by $\chi_L$ the $\ell^2$-normalized indicator function, i.e., $\chi_L(j)=\abs L^{-1/2}$ for $j\in L$ and $\chi_L(j)=0$ otherwise.

\begin{cor}\label{lem:fouriereq}
Let $J$ be a finite abelian group, with Fourier matrix $U$, and $L\subset J$ a subgroup. Then
\begin{equation}\label{FourierH}
  U \chi_L=\chi_{L^\perp},
\end{equation}
and the vectors of the form $\chi'(j')= \bic(j',k)\, \chi_L(j'-j)$, where $j\in J/L$ and $k\in K/L^\perp$ are an orthonormal basis so that each $\kettbra{\chi'}$ is an equality state. Moreover, all equality states are of this form.
\end{cor}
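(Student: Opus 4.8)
The plan is to treat the three assertions separately: the Fourier identity \eqref{FourierH}, the fact that each $\chi'$ is an equality state, and the converse that these exhaust the equality states. The first two are direct character computations; the converse carries the weight, since it must upgrade the local information supplied by Theorem~\ref{thm:equality} into the global statement that the supports are dual cosets.

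For \eqref{FourierH} I would simply evaluate the $k$-component $(U\chi_L)(k)=\abs L^{-1/2}d^{-1/2}\sum_{j\in L}\bic(j,k)$. For fixed $k$ the map $j\mapsto\bic(j,k)$ restricts to a character of $L$, so by orthogonality of characters the sum equals $\abs L$ when $k\in L^\perp$ and $0$ otherwise; together with $\abs L\,\abs{L^\perp}=d$ (equivalently $c=d^{-1/2}$) this yields $U\chi_L=\chi_{L^\perp}$. Next, for the vectors $\chi'$: since $\abs{\chi'(j')}=\chi_L(j'-j)$, the distribution $p_X^{\chi'}$ is uniform on the coset $j+L$ of size $\abs L$, and a uniform distribution has $H_\alpha=\log(\text{support size})$ for every $\alpha$; hence $H_\alpha(p_X^{\chi'})=\log\abs L$. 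Applying $U$ and using \eqref{FourierH} together with the interchange of translation and modulation under the Fourier transform, $U\chi'$ is again a phase times a modulated translate of $\chi_{L^\perp}$, so $p_Y^{\chi'}$ is uniform on a coset of $L^\perp$ and $H_\beta(p_Y^{\chi'})=\log\abs{L^\perp}$. The two entropies sum to $\log(\abs L\,\abs{L^\perp})=\log d=\log(1/c^2)$, so \eqref{EntrUnc} is saturated. Orthonormality follows from the same character sums: translates supported on distinct cosets are trivially orthogonal, while for a common coset the overlap reduces to $\sum_{l\in L}\bic(l,k''-k)$, which vanishes unless $k''\equiv k \pmod{L^\perp}$. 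There are $\abs{J/L}=d/\abs L$ translates and $\abs{K/L^\perp}=\abs L$ modulations, i.e.\ $d$ vectors in all, so they form a basis.

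For the converse, let $\kettbra\psi$ be an equality state. Theorem~\ref{thm:equality} makes it pure, with $p_X^\psi,p_Y^\psi$ uniform on supports $s_X,s_Y$ satisfying $\abs{s_X}\,\abs{s_Y}=1/c^2=d$ and with $U_{jk}$ constant on $s_X\times s_Y$ after a choice of phases, i.e.\ $\bic(j,k)=f(j)g(k)$ there with $f,g$ of unit modulus. Fixing $j_0\in s_X$, $k_0\in s_Y$, the bicharacter property turns this factorization into the clean condition $\bic(j-j_0,k-k_0)=1$ for all $j\in s_X$, $k\in s_Y$. Setting $L=\langle s_X-j_0\rangle$ and $M=\langle s_Y-k_0\rangle$, triviality of $\bic$ on a generating set forces $M\subseteq L^\perp$, and then the squeeze $d=\abs{s_X}\,\abs{s_Y}\le\abs L\,\abs M\le\abs L\,\abs{L^\perp}=d$ collapses to equalities, giving $s_X=j_0+L$ and $s_Y=k_0+L^\perp$ with $L$ a subgroup. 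The hard part is exactly this step: recognising that factorisation of the overlap is equivalent to annihilation, and then using the cardinality constraint \eqref{eq:easyCheckCriterion} to force the supports to be full dual cosets rather than arbitrary subsets.

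Finally, to fix the phases of $\psi$ I would translate and modulate so that $s_X=L$ and $s_Y=L^\perp$ (operations preserving both equality and the family $\{\chi'\}$), writing $\psi(j)=\abs L^{-1/2}e^{\I\theta(j)}$ on $L$. For $k\in L^\perp$ one has $\bic(j,k)=1$ on $L$, so $(U\psi)(k)=d^{-1/2}\abs L^{-1/2}\sum_{j\in L}e^{\I\theta(j)}$ is independent of $k$, and its modulus must equal $\abs{L^\perp}^{-1/2}$; this forces $\bigl|\sum_{j\in L}e^{\I\theta(j)}\bigr|=\abs L$, and the triangle inequality is tight only when all the phases agree. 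Hence $\psi\propto\chi_L$, and undoing the translation and modulation exhibits $\psi$ in the stated form up to an irrelevant global phase.
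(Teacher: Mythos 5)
Your proof is correct, and while your forward direction (character sums for \eqref{FourierH}, uniform distributions on cosets for the $\chi'$, the counting of translates and modulations) matches the paper's, your converse takes a genuinely different route at the decisive step. The paper first translates and modulates so that $0\in s_X$ and $0\in s_Y$; evaluating the factorization $\bic(j,k)=\mu(k)\nu(j)$ at the origin forces $\mu,\nu$ constant, hence $\bic\equiv1$ on $s_X\times s_Y$ and $\psi=\chi_{s_X}$ up to a global phase, which settles the phase rigidity at once. It then obtains the reverse inclusion $s_X^\perp\subseteq s_Y$ by a positivity-of-support argument: for $k\in s_X^\perp$ one has $(U\psi)(k)=(U\psi)(0)>0$, so $k$ lies in the support $s_Y$; thus $s_Y=s_X^\perp$ is a subgroup automatically, $s_X=s_Y^\perp$ symmetrically, and $\abs{L}\,\abs{L^\perp}=\abs{J}$ drops out of unitarity as a byproduct. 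You instead pass from the factorization to the clean condition $\bic(j-j_0,k-k_0)=1$ via bimultiplicativity, form the subgroups generated by the difference sets, and close with the cardinality squeeze $d=\abs{s_X}\,\abs{s_Y}\leq\abs{L}\,\abs{M}\leq\abs{L}\,\abs{L^\perp}=d$ --- which is precisely where \eqref{eq:easyCheckCriterion} of Theorem~\ref{thm:equality} enters, a hypothesis the paper's corollary proof never uses. What each approach buys: the paper's positivity argument identifies the supports as annihilators directly, getting the subgroup structure and the phases essentially for free from the theorem's conventions; your squeeze works with cosets from the outset (no preliminary translation is needed to identify the supports) and separates the algebra from the analysis, at the price of having to re-derive the phase rigidity by hand via tightness of the triangle inequality in $\abs{(U\psi)(k)}=\abs{L^\perp}^{-1/2}$ --- a valid and self-contained substitute for the paper's reading of constant phases off the constant $\nu$. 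One minor inaccuracy, not a gap: your parenthetical asserting that $\abs{L}\,\abs{L^\perp}=d$ is ``equivalently $c=d^{-1/2}$'' conflates two different facts; that identity is Pontryagin duality (one has $L^\perp\cong\widehat{J/L}$), or can be extracted from unitarity of $U$ as the paper does, and both of your uses of it (normalizing \eqref{FourierH} and closing the squeeze) are sound once it is so justified.
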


Note that in the formula for $\chi'$ we can take arbitrary $j\in J$ and $k\in K$, but two such choices $(j_1,k_1)$ and $(j_2,k_2)$ define the same function $\chi'$ when $j_1-j_2\in L$ and $k_1-k_2\in L^\perp$. This observation is expressed by taking $j,k$ in the respective quotients.

We remark that, by the fundamental structure theorem of finite abelian groups, every such group is a cartesian product of cyclic groups, and has subgroups of every order which divides $d$ (see Thm.~4.3 in \Cite{gallian}). Hence the equality points on the MU line are {\it all} points $(\log d_1,\log d_2)$ with $d_1d_2=d$.

\begin{proof}
Let $\kettbra\psi$ be an equality state. The Theorem then says that for $j\in s_X$, and $k\in s_Y$
we must have $\bic(j,k)=\mu(k)\nu(j)$ for suitable phase-valued functions $\mu:s_Y\to\Cx$ and $\nu:s_X\to\Cx$. Now we can apply translations as in the construction of $\chi'$ in the Corollary to get an equality state with $0\in s_X$ and $0\in s_Y$, from which we get $\mu(k)\nu(0)=1$ and $\mu(0)\nu(j)=1$, so that the functions $\mu,\nu$ are actually constant. After applying an overall phase factor we can assume without loss of generality, that
$\bic(j,k)=1$ for $j\in s_X$, and $k\in s_Y$, and that $\psi=\chi_{s_X}$. In terms of annihilators this is expressed equivalently by $s_Y\subset s_X^\perp$ or $s_X\subset s_Y^\perp$.

When $k\in s_X^\perp$ we still have  $\bic(j,k)=1$ for $j\in s_X$. But then $(U\psi)(k)=(U\psi)(0)>0$ and we must also have $k\in s_Y$. It follows that $s_X^\perp\subset s_Y$. Combined with the already established reverse inclusion we get that $s_Y=s_X^\perp$ and, symmetrically $s_X=s_Y^\perp$. Note that since any set of the form $A^\perp$ is automatically a subgroup, we have shown that we can take $s_X=L$, $s_Y=L^\perp$ for some subgroup $L\subset J$.

We have so far only shown that $U\chi_L$ is constant on $L^\perp$, namely equal to $\sqrt{\abs L/\abs J}$, coming from the summation of $\abs L$ terms equal to $\abs L^{-1/2}$, and observing the overall normalization factor $\abs L^{-1/2}$ of the Fourier matrix. We also have to show that $\sum_{j\in L}\bic(j,k)=0$ whenever $k\notin L^\perp$. However, in that case $k$ induces a non-constant complex homomorphism on $L$, so it suffices to show that such functions add up to $0$ on any finite abelian group. However, this is immediately obvious for cyclic groups, and hence follows for arbitrary groups by the structure theorem. So we conclude that $U\chi_L$ is proportional to $\chi_{L^\perp}$, and since $U$ is unitary, it must be equal, and $\abs{L^\perp}\,\abs L=\abs J$.

Finally, let us count the translates $\chi'$ for a given subgroup. Clearly, they are orthogonal to $\chi_L$ whenever either $j+L\cap L=\emptyset$ or  $k+L^\perp\cap L\perp=\emptyset$. In other words, by taking one representative $g$ from each class in $G/H$ and also one $k$ from each class in $K/L^\perp$ we get an orthogonal family. This has $(\abs J/\abs L)\,(\abs{K}/\abs{L^\perp})=\abs J$, i.e., is an orthonormal basis.
\end{proof}

For a product of abelian groups the Fourier matrix is the tensor product of the Fourier matrices of the factors. Moreover one gets many equality states by tensoring, i.e., by taking subgroups of the form $L_1\times L_2\subset J_1\times J_2$. This additive structure is quite apparent from \Fig~\ref{fig:out236}). It is therefore useful to note that this is also true without assuming the group structure. This is shown by the following result.

\begin{cor}\label{lem:tensoreq}
Let $U_1, U_2$ be unitary operators of dimension $d_1$ and $d_2$, respectively. Suppose that for each unitary operator there exist an equality state $\sigQual^{1}$ and $\sigQual^{2}$ as characterized by Theorem \ref{thm:equality}. Then the state $\sigQual=\sigQual^{1}\otimes\sigQual^{2}$ is an equality state for the unitary operator $U_1\otimes U_2$.
\end{cor}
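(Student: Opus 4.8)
The plan is to verify that $\sigQual=\sigQual^1\otimes\sigQual^2$ is an equality state by exploiting the additivity of both the entropies and the Maassen--Uffink bound under tensoring, which is shorter than re-checking the geometric conditions \eqref{psieq} directly.

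First I would record how the ingredients of the MU inequality \eqref{EntrUnc} behave under the tensor product. The eigenbases of the two observables attached to $U_1\otimes U_2$ are the product bases $\{x_i^{(1)}\otimes x_k^{(2)}\}$ and $\{y_j^{(1)}\otimes y_l^{(2)}\}$, so the relevant overlap matrix is simply $U_1\otimes U_2$, whose entries are the products $(U_1)_{ij}(U_2)_{kl}$. Since all moduli are nonnegative, the maximal overlap multiplies, $c=c_1c_2$, where $c_m$ is the maximal overlap for $U_m$. Consequently the MU bound for $U_1\otimes U_2$ splits as $\log(1/c^2)=\log(1/c_1^2)+\log(1/c_2^2)$, the sum of the two individual bounds. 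Here I use that the \Renyi\ parameters $\alpha,\beta$ (with $1/\alpha+1/\beta=2$) are the same for all three systems, so the same inequality governs each.

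Second, the product state yields product distributions, $p_X^{\sigQual}=p_X^{\sigQual^1}\otimes p_X^{\sigQual^2}$ and likewise for $Y$, because measuring a product observable on a product state gives independent outcomes. I would then invoke additivity of the \Renyi\ entropy over products, $H_\alpha(p\otimes q)=H_\alpha(p)+H_\alpha(q)$, which follows directly from the definition \eqref{eq:defentropy} via $\sum_{jl}\bigl(p(j)q(l)\bigr)^\alpha=\bigl(\sum_j p(j)^\alpha\bigr)\bigl(\sum_l q(l)^\alpha\bigr)$, with the boundary cases $\alpha\in\{1,\infty\}$ handled by the corresponding direct formulas or a limit. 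Combining the two steps, $H_\alpha(p_X^{\sigQual})+H_\beta(p_Y^{\sigQual})$ equals the sum over the two factors of $H_\alpha(p_X^{\sigQual^m})+H_\beta(p_Y^{\sigQual^m})$; by hypothesis each of these attains its own MU bound $\log(1/c_m^2)$, so the total equals $\log(1/c_1^2)+\log(1/c_2^2)=\log(1/c^2)$, the MU bound for $U_1\otimes U_2$. Thus equality holds in \eqref{EntrUnc} for $\sigQual$, which is the claim.

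There is no deep obstacle here: the statement is a clean consequence of the multiplicativity $c=c_1c_2$ of the maximal overlap and the additivity of the \Renyi\ entropy on product distributions. The only points deserving a line of care are the additivity at the endpoints $\alpha\in\{1,\infty\}$ together with the matching $\beta$, and the remark that taking the maximum commutes with the product of nonnegative moduli. As an alternative I could instead set $\psi=\psi_1\otimes\psi_2$ and check the explicit characterization of Theorem~\ref{thm:equality} term by term: the supports multiply, $|s_X|=|s_X^1|\,|s_X^2|$ and $|s_Y|=|s_Y^1|\,|s_Y^2|$, the product phases make $\braket{y_j}{x_i}=c$ and give amplitudes $|s_X|^{-1/2},|s_Y|^{-1/2}$ on the product supports, and \eqref{eq:easyCheckCriterion} reads $|s_X|\,|s_Y|=(1/c_1^2)(1/c_2^2)=1/c^2$; but the additivity argument above is the more economical route.
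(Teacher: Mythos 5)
Your proposal is correct and follows essentially the same route as the paper's own proof: multiplicativity of the maximal overlap, $\max_{j,k}\abs{(U_1\otimes U_2)_{jk}}=\max_{j,k}\abs{U_{1,jk}}\,\max_{j,k}\abs{U_{2,jk}}$, combined with additivity of the R\'enyi entropies on product distributions, so that the saturated bounds of the two factors add up to the MU bound for $U_1\otimes U_2$. The extra remarks (endpoint cases $\alpha\in\{1,\infty\}$ and the alternative check via the explicit characterization of Theorem~\ref{thm:equality}) are sound but not needed beyond what the paper does.
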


\begin{proof}
First, note that $\max_{j,k}\abs{(U_1\otimes U_2)_{jk}}= \max_{j,k}\abs{U_{1,jk}} \, \max_{j,k}\abs{U_{2,jk}}$. The MU relation then implies that, for any state $\sigma$ on a $d_1 \, d_2$-dimensional Hilbert space,
\begin{equation}
\noindent
H_{\alpha}(p_X^{\sigma}) + H_{\beta}(p_Y^{\sigma})\geq - 2 \log \max_{j,k}\abs{(U_1\otimes U_2)_{jk}} =  - 2 \log \max_{j,k}\abs{U_{1,jk}} \, \max_{j,k}\abs{U_{2,jk}}  \ .
\end{equation}
In particular, for the state $\sigQual=\sigQual^{1}\otimes\sigQual^{2}$, we have
\begin{align}
H_{\alpha}(p_X^{\sigQual}) + H_{\beta}(p_Y^{\sigQual}) &= H_{\alpha}(p_X^{\sigQual^{1}}) + H_{\alpha}(p_X^{\sigQual^{2}}) + H_{\beta}(p_Y^{\sigQual^{1}}) + H_{\beta}(p_Y^{\sigQual^{2}}) \nonumber\\
&=  - 2 \log \max_{j,k}\abs{U_{1,jk}} \, \max_{j,k}\abs{U_{2,jk}} \ .
\end{align}
Hence, $\sigQual$ is an equality state for $U_1\otimes U_2$.
\end{proof}

This Corollary should not be taken to suggest that {\it only} products will be equality states. For example, take the Fourier matrix of any abelian group of the form $J\times J$, which is the tensor product of two copies of the Fourier matrix of $J$. Then each subgroup $L$ with $\abs J$ elements generates a basis of equality states for the point $(\log\abs J,\log\abs J)$. These are tensor product states for the subgroup $L=\{(j,0)|j\in J\}=J\times\{0\}$. But for $H=\{(j,j)|j\in J\}$ we get a maximally entangled equality state. Again, the basic idea of this example generalizes to more general settings. If $U_1$ is any Hadamard matrix and $\overline{U_1}$ its complex conjugate, the maximally entangled vector $\psi=d^{-1/2}\sum_j\ket{jj}$ is invariant under $U=U_1\otimes\overline{U_1}$. Hence both $\psi$ and $U\psi=\psi$ belong to the equidistribution on $d$ points, and $\kettbra\psi$ is an equality state with entropies $(\log d,\log d)$, just like $\kettbra\phi$ with $\phi= d^{-1/2}\sum_j\ket{1j}$.

Perhaps one of the more surprising aspects of Theorem~\ref{thm:equality} is that neither the characterization of the equality states nor indeed the value of the lower bound depends on $\alpha,\beta$. Hence we have

\begin{cor}\label{eqindalpha}
Let $\sigQual$ be an equality state, i.e. it saturates the uncertainty relation for some $\alpha,\beta>\frac{1}{2}$ satisfying the duality relation. Then $\sigQual$ is also an equality state for all other pairs $(\alpha,\beta)$ that satisfy the duality relation, including $(\alpha,\beta)=(1/2,\infty),\ (\infty,1/2)$.
\end{cor}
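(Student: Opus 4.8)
The plan is to leverage the fact, highlighted just before the statement, that the characterization \eqref{psieq} in Theorem~\ref{thm:equality} is completely insensitive to the values of $\alpha$ and $\beta$: the only role those parameters play there is through the standing assumptions $\alpha,\beta>1/2$ and the duality relation. First I would apply Theorem~\ref{thm:equality} to the pair $(\alpha,\beta)$ for which $\sigQual$ is assumed to be an equality state; since that pair is interior (both exponents strictly exceed $1/2$), the theorem applies and tells us that $\sigQual=\kettbra\psi$ is pure and, after adjusting phases, obeys \eqref{psieq}. Reading off \eqref{psieq}, the distribution $p_X^{\sigQual}$ is then uniform on $s_X$ and $p_Y^{\sigQual}$ is uniform on $s_Y$.

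The key step is the elementary observation that every \Renyi\ entropy of a uniform distribution on $n$ points equals $\log n$: for $\gamma\neq1,\infty$ one has $\sum_j p(j)^\gamma=n\cdot n^{-\gamma}=n^{1-\gamma}$, so that $H_\gamma=\tfrac1{1-\gamma}\log n^{1-\gamma}=\log n$ by \eqref{eq:defentropy}, and the Shannon and min-entropy values are likewise $\log n$. Applied to the two uniform distributions above, this gives $H_{\alpha'}(p_X^{\sigQual})=\log\abs{s_X}$ and $H_{\beta'}(p_Y^{\sigQual})=\log\abs{s_Y}$ for arbitrary $\alpha',\beta'\in[\tfrac12,\infty]$, with no dependence on the specific exponents. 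Combining this with the integrality identity \eqref{eq:easyCheckCriterion}, namely $\abs{s_X}\,\abs{s_Y}=c^{-2}$, I would conclude
\begin{equation}
H_{\alpha'}(p_X^{\sigQual})+H_{\beta'}(p_Y^{\sigQual})=\log\bigl(\abs{s_X}\,\abs{s_Y}\bigr)=\log\frac1{c^2},
\end{equation}
which is exactly the right-hand side of the MU bound for the dual pair $(\alpha',\beta')$. Hence the bound is saturated for every such pair, establishing that $\sigQual$ is an equality state throughout.

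The only delicate point, and the step I expect to require a separate word, is the boundary pairs $(\tfrac12,\infty)$ and $(\infty,\tfrac12)$, which lie outside the hypotheses of Theorem~\ref{thm:equality}. This is not a genuine obstacle, because at the endpoints I do not need the characterization itself but only the value of the entropy sum on the state already identified, and the uniform-distribution computation above holds verbatim for $\gamma=\tfrac12$ and $\gamma=\infty$ (indeed $H_{1/2}=H_\infty=\log n$ on a uniform distribution). Since the MU inequality extends to the dual endpoints by continuity of $H_\gamma$ in $\gamma$, the exact equality $H_{1/2}(p_X^{\sigQual})+H_\infty(p_Y^{\sigQual})=\log(1/c^2)$ exhibits $\sigQual$ as an equality state there as well, which finishes the argument.
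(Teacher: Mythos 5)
Your proposal is correct and takes essentially the same route as the paper: the paper's (very terse) proof likewise rests on the observation that the characterization in Theorem~\ref{thm:equality} and the bound $\log(1/c^2)$ are independent of $(\alpha,\beta)$, handling the boundary pairs $(\tfrac12,\infty)$ and $(\infty,\tfrac12)$ by taking limits so that the inequality, and hence equality, carries over. Your explicit computation --- that \eqref{psieq} forces uniform distributions on $s_X,s_Y$, so every \Renyi\ entropy pair equals $\bigl(\log\abs{s_X},\log\abs{s_Y}\bigr)$ and sums to $\log(1/c^2)$ by \eqref{eq:easyCheckCriterion} --- simply spells out what the paper leaves implicit.
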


\begin{figure}[h]\centering
	\includegraphics[width=\PICWIDTH]{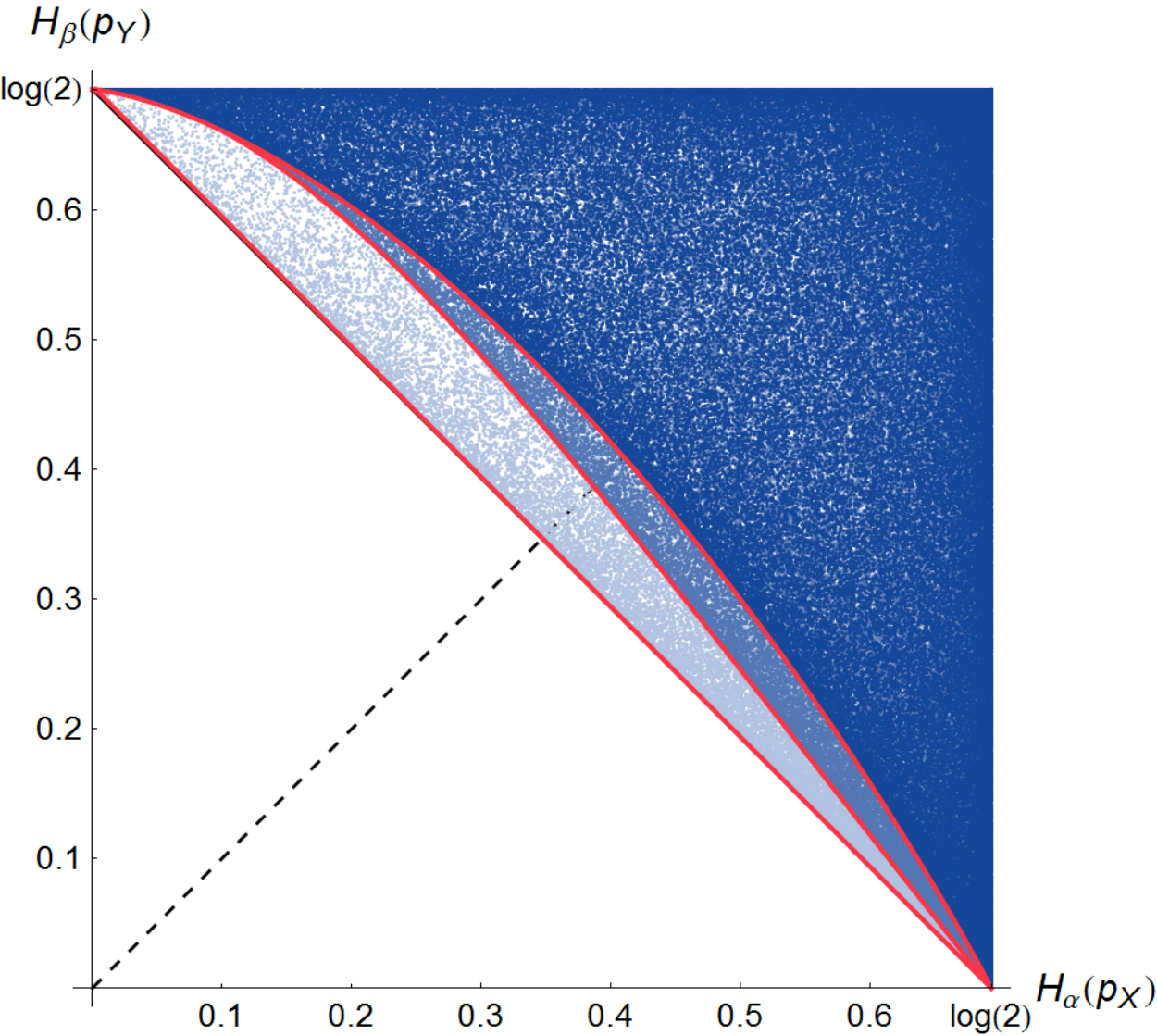}
	\caption{ \label{fig:optimalityAlpha12} Typical entropy diagram for Hadamard related observables in prime dimension for different values of $\alpha,\beta$ satisfying the duality relation \eqref{eq:duality}: $\alpha=1/2$ (light shading), $\alpha=0.6$ (medium shading) and $\alpha=0.75$ (dark shading). The MU bound is optimal if and only if $\alpha=1/2$.}
\end{figure}

The boundary cases for the inequality are proved by taking the limits on $(\alpha,\beta)$, and since the lower bound is independent of these, equality carries over. However, additional states may then also satisfy equality. Indeed,  Theorem \ref{thm:equality} does not hold in this case. As a counterexample consider an arbitrary Hadamard matrix $U$. Without loss of generality we can take it dephased, i.e., with all entries in the first row and column equal to $1/\sqrt{d}$.
Consider then some arbitrary state $\psi\in\Rl_+^d$ with real and positive components to find
\begin{equation}\label{halfinf}
  \max_k|(U\psi)_k|^2\geq|(\tilde{U}\psi)_1|^2= \frac{1}{d}\left(\sum_k\psi_k\right)^2 \ .
\end{equation}
Taking the logarithm and using the definitions \eqref{eq:defentropy} this is equivalent to
\begin{equation}\label{halfinf1}
  \log d\geq H_{\frac{1}{2}}(p_X^\psi) + H_{\infty}(p_Y^\psi),
\end{equation}
which is $\geq \log d$ by the MU inequality. Hence all such states are equality states, and we can continuously interpolate between $H_{\frac12}=0$ and $H_{\frac12}=\log d$. Thus the MU bound coincides with the optimal bound (see \Fig~\ref{fig:optimalityAlpha12}) and there is a continuum of equality states in contrast to Theorem~\ref{thm:equality}.

Another feature is true only in the boundary case, namely that for {\it every} $U$ there is an equality state.
To see this, let us consider an eigenstate $x_j$ of $X$, for which  $H_{1/2}(p_X^{x_j})=0$. But at the same time we have
\begin{equation}
\min_{j} H_{\infty}(p_Y^{x_j}) = \min_{j} (- \log \max_k |\langle y_k | x_j \rangle|^2) =  - 2 \log c \ .
\end{equation}
One could summarize this by saying that in the boundary case $\{\alpha,\beta\}=\{1/2,\infty\}$ the MU bound is just too good to allow a useful characterization of equality.

\section{Characterisation of the curve of minimal entropy pairs}\label{sec:main}
Due to the study of equality in the previous section it is clear that the MU bound is, in almost all cases, not optimal, i.e. it does not coincide with the curve of minimal entropy pairs. To characterize this optimal bound is the aim of this section. We establish three general results that hold for arbitrary dimension: First, we prove that the curve of minimal entropies can be parametrized by pure states. Second, we show that for all real-valued unitary operators we can restrict the problem to real states. And last, we establish a necessary criterion for the Fourier case which all optimal states must satisfy thereby being able to characterize a whole class of potentially optimal states. Additionally, we provide a complete characterisation of the optimal bound for the simplest case of two-dimensional state space, $d=2$. For $d=3$ there is an analytic expression\cite{englert2008}, which is well-confirmed by numerics, although not proved. However, for higher dimensions the optimal bound remains unknown. Nevertheless, we present random samples that suggest a number of conjectures, which, if true, vastly simplify the characterisation of the optimal bound.

\subsection{Sufficiency of pure states}
In this section we show that the optimal bound can be parametrized by pure states. At a first glance, this result may seem not too surprising since the situation is clear when minimizing only one concave functional $f(\rho)$ over all states:
In this case one can immediately restrict to pure states, since one of the convex components $\rho'$ of $\rho$ must always give a value $f(\rho')\leq f(\rho)$. However, the situation is not so simple when we consider a pair of concave functions, and the image of the state space under a two-component mapping $f=(f_1,f_2)$ as in (\ref{deff}).  The direct consequence of concavity is then that for, say $\rho=(\rho_1+\rho_2)/2$, the point $f(\rho)$ lies above the midpoint $M=\bigl(f(\rho_1)+f(\rho_2)\bigr)/2$ in the coordinatewise ordering, i.e., $f_i(\rho)\geq\bigl(f_i(\rho_1)+f_i(\rho_2)\bigr)/2$ for $i=1,2$ (see \Fig~\ref{fig:concave}). We therefore cannot conclude that the set $\{f(\rho)\}$ is convex: the midpoint $M$ is not in general in the set. Indeed this is clearly shown by the entropy diagrams, from which it is also clear that the complement is not convex either, except in simple cases.

\begin{figure}\centering
  \def\svgwidth{0.65\linewidth}
  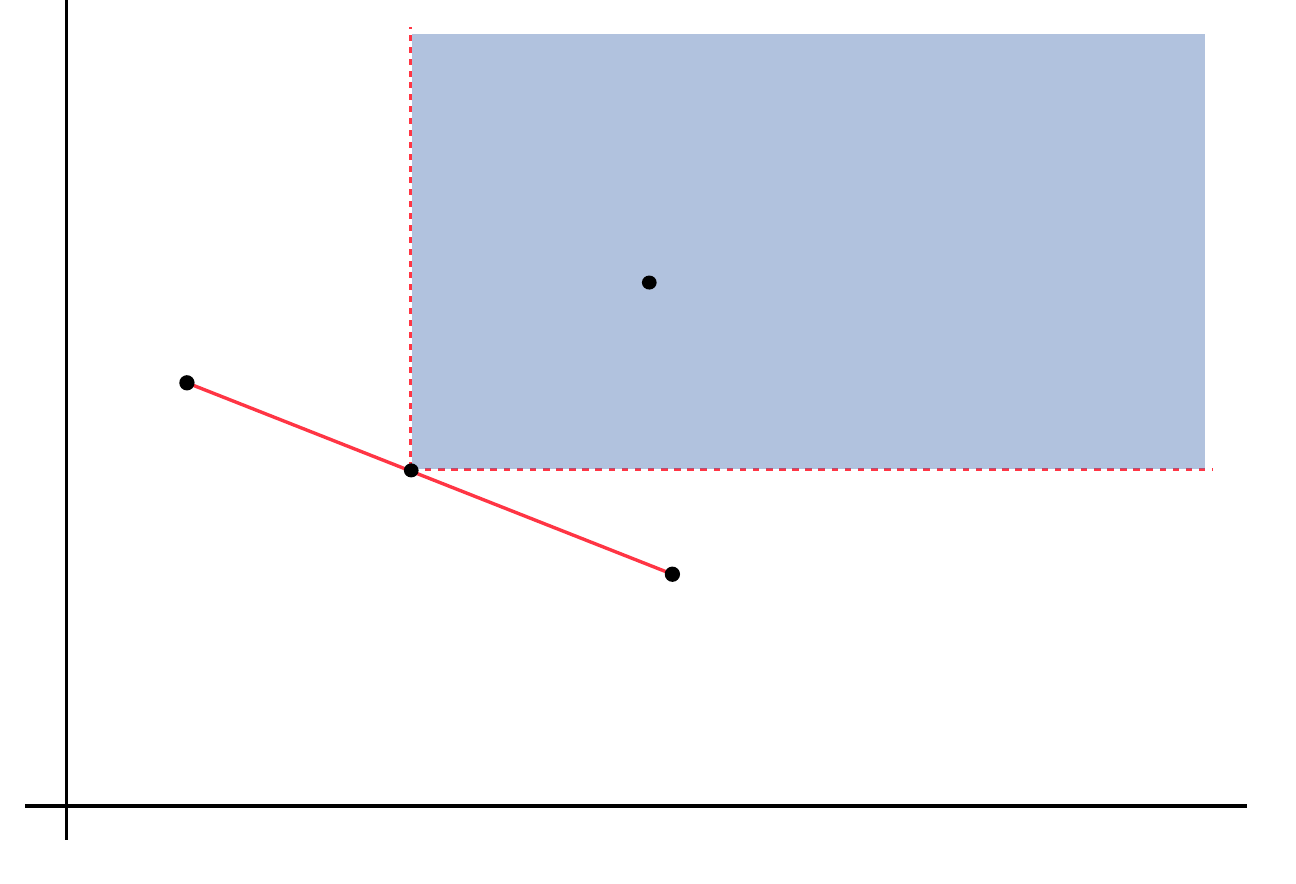
  \caption{Consequences of concavity for the set of entropy pairs.  \label{fig:concave}}
\end{figure}

For the same reasons it is not obvious that it is sufficient to restrict to pure states. This is highlighted by looking at the problem a bit more generally, considering the pairs of probability distributions in two bases.

\begin{prop}Consider two orthonormal bases $X,Y$ in a Hilbert space and let $p_X^\rho,p_Y^\rho$ denote the respective probability distributions in the state $\rho$. Then
\begin{itemize}
\item If $d=2$, then for every state $\rho$ there is pure state $\sigma$ such that $p_X^\rho=p_X^\sigma$ and $p_Y^\rho=p_Y^\sigma$.
\item If $d\leq3$, then for every $\rho$ we can find a convex decomposition $\rho=\sum_i\lambda_i\sigma_i$ into pure states $\sigma_i$ with
      $p_X^\rho=p_X^{\sigma_i}$ for all $i$.
\end{itemize}
For larger dimensions both statements fail.
\end{prop}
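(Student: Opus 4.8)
The plan is to reduce the second bullet (and part of the third) to a statement about convex decompositions of \emph{correlation matrices}, to settle the first bullet by a Bloch-ball argument, and to dispatch the failures with explicit counterexamples. For the reduction, fix $\rho$, work in the $X$-basis, and set $p_k=p_X^\rho(k)=\langle x_k\vert\rho\vert x_k\rangle$; discarding the coordinates with $p_k=0$ (on which every admissible pure state also vanishes) we may assume all $p_k>0$. Introduce the Hermitian matrix $C$ with $C_{kl}=\rho_{kl}/\sqrt{p_kp_l}$, which is positive semidefinite with unit diagonal, i.e. a correlation matrix. A pure state $\sigma=\kettbra\psi$ satisfies $p_X^\sigma=p$ exactly when $\langle x_k\vert\psi\rangle=\sqrt{p_k}\,u_k$ with $\abs{u_k}=1$, and then $\sigma$ corresponds to the rank-one correlation matrix $uu^\dagger$. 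Hence a convex decomposition $\rho=\sum_i\lambda_i\sigma_i$ with $p_X^{\sigma_i}=p$ for all $i$ exists if and only if $C=\sum_i\lambda_i\,u^{(i)}(u^{(i)})^\dagger$, that is, if and only if $C$ lies in the convex hull of the rank-one correlation matrices. Note that $Y$ plays no role in the second bullet.

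For the first bullet ($d=2$) I would argue on the Bloch ball. Writing $\rho=\tfrac12(\idty+\vec r\cdot\vec\sigma)$ with $\abs{\vec r}\le1$, the distributions $p_X^\rho$ and $p_Y^\rho$ depend only on the projections $\vec r\cdot\hat x$ and $\vec r\cdot\hat y$ onto the two measurement axes. Matching both amounts to finding a unit vector $\vec s$ (a pure state) with $\vec s\cdot\hat x=\vec r\cdot\hat x$ and $\vec s\cdot\hat y=\vec r\cdot\hat y$. These one or two independent affine constraints cut out an affine subspace of positive dimension through the interior point $\vec r$ of the unit ball; along it $\abs{\vec s(t)}^2\to\infty$ while equalling $\abs{\vec r}^2\le1$ at $\vec r$, so by continuity it meets the unit sphere, producing the desired $\vec s$.

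For the second bullet ($d\le3$) I would invoke the extremal-rank bound for the complex elliptope. If $C$ is an extreme point of rank $r$ with range $\mathcal R$, any nonzero Hermitian $H$ with range inside $\mathcal R$ and vanishing diagonal would make $C\pm tH$ stay in the elliptope for small $t$, contradicting extremality; since such $H$ range over a real space of dimension $r^2$ subject to $n$ diagonal constraints, extremality forces $r^2\le n$. For $n=d\le3$ this gives $r=1$, so every extreme point is a rank-one correlation matrix. As the elliptope is compact and convex, Minkowski and Carath\'eodory write $C$ as a convex combination of such matrices, which by the reduction above yields the required pure-state decomposition.

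Finally, the failures. For the first bullet at $d=3$ take $X$ standard, $Y$ the Fourier basis, and $\rho=\mathrm{diag}(\tfrac12,\tfrac12,0)$; then $p_X^\rho=(\tfrac12,\tfrac12,0)$ while $p_Y^\rho=(\tfrac13,\tfrac13,\tfrac13)$ is uniform. Any pure state with this $X$-distribution has the form $\psi=\tfrac1{\sqrt2}(\vert x_0\rangle+\E^{\I\phi}\vert x_1\rangle)$, for which $\abs{\langle y_j\vert\psi\rangle}^2=\tfrac13\bigl(1+\cos(\phi+2\pi j/3)\bigr)$, and these three values cannot all equal $\tfrac13$, so no pure state matches both. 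For the second bullet at $d=4$ I would exhibit a rank-two extreme point of the complex elliptope: the bound $r^2\le n$ is borderline at $r=2,n=4$, and the counting shows a generic rank-two correlation matrix is extreme; such a $C$ is not a convex combination of rank-one correlation matrices and corresponds to a state with uniform $X$-marginal admitting no decomposition of the required type. The main obstacle is exactly this elliptope step: proving the extremal-rank bound cleanly and, for $d=4$, producing and verifying an explicit rank-two extreme point (checking that the four diagonal functionals are independent, so that no zero-diagonal perturbation survives). The Bloch argument and the Fourier computation are routine by comparison.
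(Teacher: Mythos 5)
Your proposal is correct, and its skeleton matches the paper's: the same Bloch-ball argument for $d=2$ (intersecting the two affine constraint planes inside the ball and running the resulting line out to the sphere), the same $d=3$ Fourier counterexample $\rho=\mathrm{diag}(\tfrac12,\tfrac12,0)$ for the first bullet, and for the second bullet the same overall strategy of fixing the $X$-marginal, normalizing, and classifying extreme points of the resulting convex set by rank. Where you genuinely diverge is the key lemma. The paper normalizes via a diagonal conjugation $K(p)\cong K(1)$ and then treats ranks case by case: full rank is excluded by subtracting off any vector with uniform distribution, and rank $2$ is excluded for $d=3$ by a bespoke geometric argument --- writing $\rho=\kettbra{\phi_1}+\kettbra{\phi_2}$, encoding the rows as unit vectors $\Phi(x)\in\Cx^2$ viewed on the Bloch sphere, and producing a point equidistant from the three $\Phi(x)$ by intersecting perpendicular-bisector great circles, which yields a further pure convex component with uniform distribution. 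You instead pass to the complex elliptope via $C_{kl}=\rho_{kl}/\sqrt{p_kp_l}$ and invoke the general extremal-rank bound $r^2\le n$; your zero-diagonal-perturbation proof of that bound is sound (range$(H)\subseteq\mathrm{range}(C)$ together with positive definiteness of $C$ on its range gives $C\pm tH\succeq 0$ for small $t$), and it disposes of ranks $3$ and $2$ in one stroke where the paper needs two separate arguments. The two routes are in fact equivalent at rank $2$: ``the projectors $\Phi(x)\Phi(x)^\dagger$ fail to span the real $4$-dimensional space of Hermitian $2\times2$ matrices'' is precisely ``the Bloch vectors lie on a common circle,'' i.e., ``an equidistant point exists,'' so your linear-algebraic criterion and the paper's spherical geometry are the same obstruction in different clothing --- yours being the more standard and more general packaging, the paper's being the more visual one. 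The one gap you flag yourself, exhibiting a rank-two extreme point of the $4\times4$ complex elliptope, is real but trivial to close, and the paper is no more explicit there (it says only that ``it is easy to find $d$ points which do not lie on a circle''): take the four Bloch vectors at the vertices of a regular tetrahedron; they are affinely independent, so the four functionals $K\mapsto\tr\bigl(K\,\Phi(x)\Phi(x)^\dagger\bigr)$ are linearly independent on the Hermitian $2\times2$ matrices, no zero-diagonal perturbation survives, and $C=VV^\dagger$ (rows of $V$ the $\Phi(x)^\dagger$) is an extreme rank-two correlation matrix, whence $\rho=C/4$ has uniform $X$-marginal and no admissible pure decomposition. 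One cosmetic point applying equally to you and the paper: the counterexamples are given at $d=3$ and $d=4$ only, and ``for larger dimensions'' formally needs a one-line padding argument (direct-sum with extra zero rows and columns), which neither text spells out.
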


Thus, for $d=2$ the range $\{f(\rho)\}$ is already exhausted by pure states, and for $d=3$ the monotone closed uncertainty diagram can be computed just with pure states. For if $f(\rho)$ is any point in the diagram, we can decompose into the $\sigma_i$, without any increase of $f_1$, so by concavity we know one of the pure components has smaller $f_2$. However, this proof strategy will fail for $d\geq4$.

\begin{proof}
(1) For $d=2$, the set of quantum states $\rho$ with the same distribution $p_X^\rho$ is the intersection of the Bloch ball with a hyperplane. Intersecting with the hyperplane for $p_Y^\rho$ we get a line, which also intersects the Bloch sphere, i.e., there is a pure state with the same distributions.

(not 1) The example uses Fourier transform in $d=3$. Two density operators have the same position distribution iff their diagonals coincide and the same momentum distribution iff the sums $\sum_x\brAAket x\rho{x+y}$ coincide for all $y$. Now consider a diagonal matrix with diagonal entries $(1,1,0)/2$. A pure state with this diagonal will have just one non-zero phase in the 1-2 matrix element, so the sum with $y=1$ will be non-zero other than for the mixed state.

(2) Let us consider the convex subset $K(p)$ of states with $p_X^\rho=p$. We have to show that for $d=3$ all extreme points of this set are, in fact, pure. Our method will also show that this fails for $d\geq4$.

First observe by just conjugating with a positive diagonal operator from right and left we get an isomorphism of $K(p)$ and $K(q)$, as long as $p,q$ have the same support (of size $d$). So we may as well take $p$ to be uniform, for which we write $K(1)$ (Normalization factors are irrelevant here).

Let us sort the potential extreme points by rank. Full rank is not possible, since then {\it any} vector with uniform distribution could be subtracted with a positive weight. Rank 1 is uninteresting, because it is of the form we want to exclude. This takes care of $d=2$ and leaves only the rank 2 case for $d=3$.

So let us consider the case of rank 2 for general $d$. Let $\phi_1,\phi_2$ be two linearly independent vectors in the range of the density operator $\rho=\kettbra{\phi_1}+\kettbra{\phi_2}$. The condition that $\rho$ has uniform position distribution means that $|\phi_1(x)|^2+|\phi_1(x)|^2=1$ for all $x$. In other words, the pair $\Phi(x)=\bigl(\phi_1(x),\phi_2(x)\bigr)\in\Cx^2$ is a unit vector for every $x$ .
Then we ask whether there is any non-zero vector $\Psi\in\Cx^d$ of the form $\Psi(x)=\overline{\alpha_1}\phi_1(x)+ \overline{\alpha_2}\phi_2(x)$ such that $|\Psi(x)|=1$ for all $x$. This would be a convex component of $\rho$ with even distribution, so we could further decompose $\rho$.

We can read this as a scalar product $|\langle\alpha,\Phi(x)\rangle|^2$. Think of the $\Phi(x)$ and of $\alpha$ as represented on the Bloch sphere, where the geodesic distance is just a function of the above scalar product. So our question reduces to: Given $d$ vectors on the sphere, can we find one further vector which has the same distance from each of them?

Now for $d=2$ this is obvious, and for $d=3$ it works just like in the planar geometry of triangles: The locus of all points which have the same distance from $\Phi(1)$ and $\Phi(2)$  is a great circle bisecting their connecting geodesic at a right angle. Intersect with the bisector for $\Phi(2)$ and $\Phi(3)$, which gives a point which has the same distance from all three points. Therefore, for $d=3$, there are no extreme points of rank 2, hence all are of rank 1 as claimed.

For higher $d$ it is easy to find $d$ points, which do not lie on a circle, i.e., there is no point equidistant from all of them. Hence there are extreme points of $K(1)$ of rank 2.

\end{proof}

Surprisingly however, pure states can be shown to saturate all uncertainty diagrams, practically without assumptions on $X,Y,\alpha,\beta$.

\begin{thm}\label{pureNaff}
Let $f_1,f_2$ be continuous concave functionals on the state space, define the order relation $\eless$ as after equation (\ref{deff}). Then for every state $\rho$ there is a pure state $\sigma$ such that
$\sigma\eless\rho$.
\end{thm}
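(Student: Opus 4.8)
The plan is to \emph{linearise} the two concave constraints by supporting hyperplanes and then exploit the convexity of the joint numerical range of two Hermitian operators (the Toeplitz--Hausdorff theorem). Write $a=f_1(\rho)$ and $b=f_2(\rho)$, and set $T=\{\sigma:\ f_1(\sigma)\le a\ \text{and}\ f_2(\sigma)\le b\}$, so that $\rho\in T$ and the claim is precisely that $T$ contains a pure state.

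First I would dispose of the degenerate cases in which $a$ is the global minimum of $f_1$ (so that the first constraint is vacuous) or $b$ is the global minimum of $f_2$: there a pure state minimising the single remaining concave functional already lies in $T$, since a concave functional on the compact convex state space attains its minimum at an extreme point, i.e.\ at a pure state. In the remaining case I use that, by concavity of $f_1$, the super-level set $C_1=\{\sigma:\ f_1(\sigma)\ge a\}$ is convex and $\rho$ lies on its boundary; the supporting-hyperplane theorem then yields a Hermitian $A_1$ with $\tr(A_1\sigma)\ge\tr(A_1\rho)$ for all $\sigma\in C_1$. Hence no state in the open half-space $\{\tr(A_1\sigma)<\tr(A_1\rho)\}$ can satisfy $f_1\ge a$, so there $f_1<a$, and by continuity of $f_1$ the whole closed slice of the state space cut out by $\tr(A_1\sigma)\le\tr(A_1\rho)$ lies in $\{f_1\le a\}$. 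Running the same argument for $f_2$ produces a Hermitian $A_2$ and yields a convex region $G=\{\sigma:\ \tr(A_1\sigma)\le\tr(A_1\rho)\ \text{and}\ \tr(A_2\sigma)\le\tr(A_2\rho)\}\subseteq T$ with $\rho\in G$.

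It now suffices to find a pure state in $G$, and here the only genuinely quantum input enters. Consider the map $\sigma\mapsto\bigl(\tr(A_1\sigma),\tr(A_2\sigma)\bigr)$. Its range over all states is convex (an affine image of the state space) and is the convex hull of its range over pure states; by the Toeplitz--Hausdorff theorem applied to $A_1+\I A_2$ the latter range is already convex, so the two coincide. Since $\bigl(\tr(A_1\rho),\tr(A_2\rho)\bigr)$ lies in this common range, it is attained at some pure $\sigma=\kettbra\psi$. This $\sigma$ satisfies both defining inequalities of $G$ with equality, so $\sigma\in G\subseteq T$, that is $\sigma\eless\rho$, as required.

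The main obstacle I anticipate is the boundary bookkeeping in passing from the strict to the closed half-space: the pure state delivered by Toeplitz--Hausdorff sits \emph{exactly} on both hyperplanes $\tr(A_i\sigma)=\tr(A_i\rho)$, so one must be certain that $G$ is contained in $T$ up to and including its face on these hyperplanes. This is exactly what the continuity/closure step above provides, and it is precisely here that the degenerate cases---where $\tr(A_i\cdot)$ is globally minimised at $\rho$, so the slice cannot be approached from its strict interior---must be peeled off beforehand. Apart from this, the argument is robust and uses nothing about the functionals beyond concavity and continuity; the essential, and genuinely quantum, ingredient is the convexity of the numerical range, which is what fails for a generic convex body and thereby prevents a purely convex-geometric version of the statement.
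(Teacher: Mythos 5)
Your strategy (supporting hyperplanes plus Toeplitz--Hausdorff convexity of the joint numerical range) is genuinely different from the paper's proof, which instead lowers the rank step by step inside the face generated by $\rho$, using antipodal points and a connectedness/intermediate-value argument along the boundary of that face. However, as written your argument has a real gap, located exactly where your final paragraph gestures at it. The degenerate case you peel off --- ``$a$ is the global minimum of $f_1$'' --- is both mislabelled and insufficient. Mislabelled: if $a=\min f_1$ the constraint $f_1(\sigma)\le a$ is not vacuous but maximally restrictive (it forces $f_1(\sigma)=a$), and a global minimiser of $f_2$ alone need not satisfy it; the constraint is vacuous precisely when $f_1\le a$ holds on the whole state space, which is what happens when $\rho$ lies in the relative interior of $C_1$ (a concave function attaining its minimum over a convex set at a relative-interior point is constant on that set). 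Insufficient: the case that actually breaks your open-to-closed-slice step is not governed by $a$ being a global extremum of $f_1$ at all. If $\rho$ is rank-deficient, every hyperplane supporting $C_1$ at $\rho$ may simultaneously support the whole state space $S$: for instance, if $C_1$ contains a small ball tangent at $\rho$ to the hyperplane $\{\sigma:\tr(P\sigma)=0\}$, with $P$ the projector onto the kernel of $\rho$ (such a ball fits inside $S$), then the \emph{unique} supporting functional of $C_1$ at $\rho$ is $\tr(P\,\cdot\,)$, which is minimised over all of $S$ at $\rho$ even though $a$ is neither the minimum nor the maximum of $f_1$. In that situation the open half-space contains no states, the closed slice degenerates to the proper face $S\cap\{\sigma:\tr(P\sigma)=0\}$, your continuity argument gives no control of $f_1$ on that face, and concavity alone does not supply it; the pure state delivered by Toeplitz--Hausdorff can then land outside $T$.

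The gap is patchable, and the repaired argument is an attractive alternative to the paper's. Reduce at the outset to full-rank $\rho$ by induction on dimension: the face generated by $\rho$ is itself the state space of a lower-dimensional quantum system, the $f_i$ restrict to continuous concave functionals on it, and pure states of the face are pure states of the original system. For full-rank $\rho$, which lies in the relative interior of $S$, every nontrivial hyperplane through $\rho$ has states strictly on both sides, so your closure step is valid for both functionals, and the only remaining degenerate case is the genuinely vacuous one, namely $\rho$ in the relative interior of $C_i$. It is worth noting that the paper's proof is itself an induction on rank through faces, implemented without hyperplanes or numerical ranges; what your version buys, once repaired, is that the single quantum ingredient --- convexity of the joint numerical range of two Hermitian operators, which fails for a generic convex body --- is cleanly isolated, whereas the paper's argument works directly on the state space and needs only concavity, continuity, and the face structure.
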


\begin{proof}
The plan of the proof is to show that for every non-pure $\rho$ we can find another state $\sigma$ of strictly smaller rank such that $\sigma\eless\rho$. Then we can successively lower the rank, arriving finally at a pure state.

\begin{figure}
\centering
  \def\svgwidth{0.65\linewidth}
  \large
  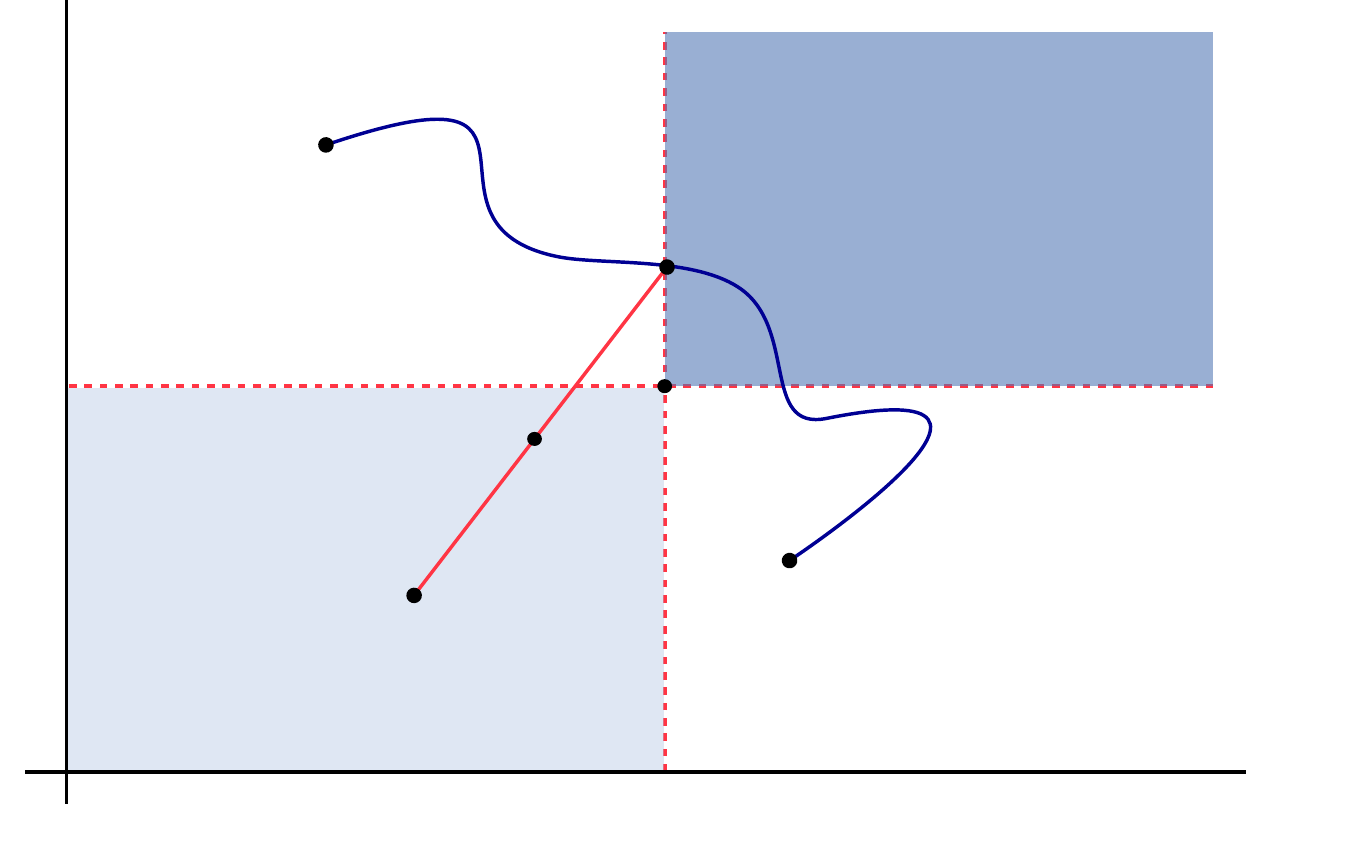
  \caption{States appearing in the proof of Theorem~\ref{pureNaff} as mapped to the two entropies plane. \label{fig:mfs}}
\end{figure}

Consider the face $F$ of the state space generated by $\rho$. Its topological boundary $\partial F$ consists precisely of the possible convex components of $\rho$ of lower rank, and is connected. For each point $\sigma\in\partial F$ there is a unique ``antipode'' $\sigma^\antipode$. It is defined as
\begin{equation}\label{antipode}
    \sigma^\antipode=\frac1\lambda\Bigl(\rho-(1-\lambda)\sigma\Bigr)
\end{equation}
for the smallest $\lambda$ for which the right hand side is positive semidefinite. It is clearly a state of reduced rank, i.e., $\sigma^\antipode\in\partial F$. We note that the required weight $\lambda$ cannot be $0$ or $1$.

We need not consider the case that $\sigma\eless\rho$, since otherwise we have found the desired element. Therefore, by exchanging the functions $f_1$ and $f_2$ if necessary, we may assume that $f_1(\sigma)>f_1(\rho)$.
We cannot also have $f_1(\sigma^\antipode)\geq f_1(\rho)$. Indeed, this would lead to the contradiction
\begin{equation}
    f_1(\rho)\geq (1-\lambda)f_1(\sigma)+\lambda f_1(\sigma^\antipode) >f_1(\rho).
\end{equation}

Now consider a continuous curve $[0,1]\ni t\mapsto\gamma(t)\in\partial F$ connecting $\sigma$ and $\sigma^\antipode$, i.e., such that $\gamma(0)=\sigma$ and $\gamma(1)=\sigma^\antipode$ (see \Fig~\ref{fig:mfs}). Since $f_1$  was assumed to be continuous the previous argument shows that, for some $t$, $f_1\bigl(\gamma(t)\bigr)=f_1(\rho)$.

If $f_2\bigl(\gamma(t)\bigr)\leq f_2(\rho)$ we have found the desired element $\gamma(t)\eless\rho$.
The non-trivial case to consider is therefore $f_2\bigl(\gamma(t)\bigr)>f_2(\rho)$, or $\rho\eless\gamma(t)$.
Let $\lambda\in(0,1)$ be the weight so that $\rho=(1-\lambda)\gamma(t)+\lambda\gamma(t)^\antipode$. Then by concavity, for $i=1,2$,
\begin{eqnarray}
    f_i(\rho)&\geq& (1-\lambda)f_i\bigl(\gamma(t)\bigr)+\lambda f_i(\gamma(t)^\antipode) \nonumber\\
             &\geq& (1-\lambda)f_i(\rho)+\lambda f_i\bigl(\gamma(t)^\antipode\bigr) \nonumber\\
    \mbox{ i.e.,}\quad
    f_i(\rho)&\geq&  f_i\bigl(\gamma(t)^\antipode\bigr).     
\end{eqnarray}
Therefore $\gamma(t)^\antipode\eless\rho$.

\end{proof}

\subsection{Sufficiency of real states for real unitary matrices}

From the previous section we know that for all unitary operators the complete optimal bound can be parametrized by pure states. Now we show that if the unitary matrix linking the two observables is real-valued, then we can further restrict the set of states for the complete optimal bound to the set of real-valued vectors. In this whole subsection we fix the Hilbert space to be $\Cx^d$ with componentwise complex conjugation, so that the real vectors $\Rl^d\subset\Cx^d$ are naturally embedded. 

\begin{thm}\label{thm:real}
Let $f_1,f_2$ be continuous concave functionals on the state space and their inputs linked by a real unitary operator $U_{\rm real}$. Also define the order relation $\eless$ as after equation (\ref{deff}). Then for every state $\rho$ there is a pure and real state $\sigma$ such that
$\sigma\eless\rho$.
\end{thm}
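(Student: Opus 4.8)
The plan is to exploit the symmetry of the problem under complex conjugation. I work in the fixed Hilbert space $\Cx^d$ with componentwise conjugation and take $\{x_i\}$ to be the standard (real) basis; since $U_{\rm real}$ is real the second basis $\{y_j\}$ then consists of real vectors as well. Write $\overline\rho$ for the entrywise complex conjugate of a state $\rho$, which is again a state (conjugation leaves the eigenvalues unchanged). The first step is the observation that conjugation leaves \emph{both} outcome distributions invariant: $p_X^{\overline\rho}=p_X^\rho$ holds for every state because the diagonal entries $\langle x_i|\rho|x_i\rangle$ are real, and $p_Y^{\overline\rho}=p_Y^\rho$ holds precisely because the $y_j$ are real, so that $\langle y_j|\overline\rho|y_j\rangle=\overline{\langle y_j|\rho|y_j\rangle}=\langle y_j|\rho|y_j\rangle$. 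As $f_1$ depends on $\rho$ only through $p_X^\rho$ and $f_2$ only through $p_Y^\rho$, this yields $f_i(\overline\rho)=f_i(\rho)$ for $i=1,2$; this is the only place where reality of $U_{\rm real}$ is used.

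Next I would form the real state $\tau=\tfrac12(\rho+\overline\rho)=\Real\rho$. The key point, which dissolves the apparent obstacle that symmetrising can only raise a concave functional, is that here the two distributions are not merely bounded but reproduced \emph{exactly}: by linearity of $\rho\mapsto p_X^\rho$ together with the invariance above, $p_X^\tau=\tfrac12\bigl(p_X^\rho+p_X^{\overline\rho}\bigr)=p_X^\rho$, and likewise $p_Y^\tau=p_Y^\rho$. Hence $f_i(\tau)=f_i(\rho)$ exactly, so $\tau\eless\rho$ while $\tau$ is real. It therefore suffices to produce, for the real state $\tau$, a pure real $\sigma$ with $\sigma\eless\tau$.

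For this last step I would rerun the argument of Theorem~\ref{pureNaff} inside the convex set of real states. That proof uses only that the functionals are continuous and concave, that the face generated by a state consists of its convex components of lower rank, and that this face has connected topological boundary so the intermediate-value argument on $f_1$ goes through. All of this survives the restriction to real symmetric density matrices: the antipode construction \eqref{antipode} maps real states to real states, and the face generated by a real state of rank $r$ is the set of real states supported on an $r$-dimensional real subspace, a convex body of dimension $\binom{r+1}{2}-1$, which is $\ge 2$ as soon as $r\ge 2$ and hence has connected boundary (for $r=1$ the state is already pure and real). The rank-reduction therefore terminates at a pure real $\sigma$ with $\sigma\eless\tau\eless\rho$, proving the theorem.

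The only genuine subtlety, and the main thing to get right, is the exact equality $f_i(\tau)=f_i(\rho)$: it is tempting to estimate by concavity and obtain the useless inequality $f_i(\tau)\ge f_i(\rho)$, whereas the correct observation is that conjugation fixes the measured distributions identically, so averaging $\rho$ with $\overline\rho$ costs no entropy. Everything else is a verbatim transcription of Theorem~\ref{pureNaff} to the real state space, the transcription being legitimate because the dimension count above guarantees the connectedness of boundaries of real faces that the earlier proof relied on.
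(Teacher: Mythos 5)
Your proof is correct, but it takes a genuinely different route from the paper's. The paper first invokes Theorem~\ref{pureNaff} to pass to a pure state $\psi$, splits it as $\psi=\sqrt{\mu}\,\gamma+\I\sqrt{1-\mu}\,\sigma$ into normalized real and imaginary parts, notes that reality of $U$ kills the cross terms so that $p_{X/Y}^{\psi}=\mu\, p_{X/Y}^{\gamma}+(1-\mu)\,p_{X/Y}^{\sigma}$, and then --- since concavity alone yields only one-sided inequalities --- runs an intermediate-value argument over the global phase $\E^{\I t}\psi$ (which leaves both $f_i$ invariant and continuously exchanges the roles of real and imaginary part) to find a $t_0$ at which concavity forces the pure real state $\sigma(t_0)\eless\psi$. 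You instead realify first: the invariance $p_X^{\overline\rho}=p_X^\rho$, $p_Y^{\overline\rho}=p_Y^\rho$ together with linearity of $\rho\mapsto p^\rho$ gives the \emph{exact} equality $f_i(\Real\rho)=f_i(\rho)$, with no continuity argument at all; note that for pure $\psi$ your $\tau=\Real\kettbra{\psi}$ is precisely the paper's convex combination $\mu\kettbra{\gamma}+(1-\mu)\kettbra{\sigma}$, so your conjugation trick is the density-matrix form of the same vanishing-cross-term fact, exploited more efficiently. The price is that you must re-run the rank-reduction of Theorem~\ref{pureNaff} inside the real state space rather than quote it, and you correctly isolate and verify the only ingredients needing re-checking: the antipode \eqref{antipode} preserves reality, the real face generated by a real state of rank $r$ is the set of real states on its (complexified) real support, and its boundary is connected since that face has dimension $r(r+1)/2-1\geq 2$ for $r\geq 2$. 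What each approach buys: yours makes the realification step cost-free and even yields the stronger statement that every state has a \emph{real} state with the identical entropy pair (in line with the $d=2$ discussion in Sect.~\ref{sec:d2}), while avoiding the paper's somewhat delicate phase-rotation argument; the paper's version uses Theorem~\ref{pureNaff} as a black box and produces the pure real state in one stroke. Both arguments, yours and the paper's, use that $f_1$ and $f_2$ factor through $p_X^\rho$ and $p_Y^\rho$ respectively, which is the intended reading of the hypothesis that the inputs are linked by $U_{\rm real}$.
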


\begin{proof}
The idea of the proof is to employ again the proof technique of Theorem \ref{pureNaff}, i.e. decompose a state in two states with the desired property (in this case, real states) and use the concavity property of the functions.

Let $\psi\in\Cx^d$ be a pure state. Since we are interested in a decomposition into real states, it is natural to consider the decomposition 
\begin{equation}
\psi=\sqrt{\lambda} v + \I \sqrt{1-\lambda} w
\end{equation}
where $v,w\in{\Rl}^d$ are the normalized real and imaginary part of $\psi$ and $\lambda=|\Real(\psi)|^2$ ranges from $0$ to $1$. We are only interested in the case where neither $v\eless\psi$ nor $w\eless\psi$, otherwise the statement follows immediately. Furthermore, we assume without loss of generality that $f_1(v)>f_1(\psi)$. Similar to the proof in Theorem \ref{pureNaff} we cannot also have that $f_1(w)>f_1(\psi)$ because we would then find the contradiction
\begin{equation}
f_1(\psi)\geq\lambda f_1(v) + (1-\lambda)f_1(w)>f_1(w) \ .
\end{equation}
Consider now the states
\begin{equation}
\ph(t):=\E^{\I t}\psi
\end{equation}
and their normalized real and imaginary part
\begin{align}
\gamma(t)&:=\Real\bigl(\ph(t)\bigr)/|\Real\bigl(\ph(t)\bigr)|\ ,\nonumber\\
\sigma(t)&:=\Imag\bigl(\ph(t)\bigr)/|\Imag\bigl(\ph(t)\bigr)|
\end{align}
such that
\begin{equation}
\ph(t)= \sqrt{\mu(t)} \gamma(t) + \I \sqrt{1-\mu(t)}\sigma(t) \ ,
\end{equation}
where $\mu(t)=||\gamma(t)||$.
Note that $f_i(\ph(t))=f_i(\psi)$ for all $t\in(0,2\pi)$. Also note that for a real-valued unitary operator the probability distributions $p_{X}^{\ph(t)}$ and $p_{Y}^{\ph(t)}$ have the same form
\begin{equation}
p_{X/Y}^{\ph(t)}= \mu(t) p_{X/Y}^{\gamma(t)} + \bigl(1-\mu(t)\bigr)p_{X/Y}^{\sigma(t)} \ .
\end{equation}
Due to continuity we know that there exists $t_0$ such that either $\gamma(t_0)\eless\psi$, from which we obtain the desired statement, or $\psi\eless\gamma(t_0)$. Using the concavity of the functions $f_i$, the latter then implies
\begin{align}
f_i(\psi)=f_i\bigl(\ph(t_0)\bigr)&\geq \mu(t_0) f_i\bigl(\gamma(t_0)\bigr) + \bigl(1-\mu(t_0)\bigr)  f_i\bigr(\sigma(t_0)\bigl)\nonumber\\
&\geq  \mu(t_0) f_i(\psi) + \bigl(1-\mu(t_0)\bigr)  f_i\bigr(\sigma(t_0)\bigl) \ ,
\end{align}
from which obtain $f_i\bigl(\sigma(t_0)\bigr)\leq f_i(\psi)$, or equivalently $\sigma(t_0)\eless\psi$.
\end{proof}

\subsection{Variatonal method}
So far we characterized the optimal bound by the order relation $\eless$. Equivalently, we may also consider an optimisation problem as mentioned in \eqref{eq:optimalbounddef}: Given some fixed value of $H_{\beta}(p_Y^{\rho})= \delta$ the optimal bound $\gamma$ is described by minimising $H_{\alpha}(p_X^{\rho})$, i.e.
\begin{equation}
\gamma(\delta)=\min_{\rho} \{ H_{\alpha}(p_X^{\rho}) |H_{\beta}(p_Y^{\rho})= \delta \} \ ,
\end{equation}
where $\delta$ ranges from $0$ to $\log d$. However, performing this optimisation is in general quite difficult, especially  because a nice characterisation of the constant entropy set $\{\rho | H_{\beta}(p_Y^{\rho})= \delta \}$ is not known. Instead, we restrict to optimising over a subset of this constant entropy set, namely states with varied phases. Clearly, this method will not yield a sufficient criterion for a state to be optimal. However, it provides us with a necessary criterion which allows us to identify a whole class of candidates of optimal states.

More concretely, using Theorem \ref{pureNaff} we consider pure states $\ph\in\Cx^d$ and denote the components of the phase-varied state in $Y$ basis by
\begin{equation}
\psi_j=\ph_j \exp \left( \frac{2\pi\I}{d} \, \theta_j \right)
\end{equation}
for some phases $\theta_j$. Varying these phases does not change the probability distribution, $p_Y^{\psi}=p_Y^{\ph}$, and hence the phase varied states form a subset of the constant entropy set. For observables linked by Fourier transformation, we can optimize $H_{\alpha}(p_X^{\psi})$ over these states to find the following extremality criterion:

\begin{lem}
Let the two observables $X$ and $Y$ be linked by the Fourier matrix \eqref{eq:Fourier} and let $\psi$ denote an optimal state of this setup. Furthermore, let $\hat{\psi}$ denote the Fourier transform of $\psi$. Then $\psi$ satisfies
\begin{equation}\label{eq:extremality}
\Imag\left( \psi_k \sum_{j=1}^d \frac{\partial H_{\alpha}(p_X^{\psi})}{\partial |\hat{\psi}_j|^2} \overline{\hat{\psi}_j} \exp\Big(\frac{2\pi\I j k}{d} \Big)  \right)=0 \quad \forall k .
\end{equation}
\end{lem}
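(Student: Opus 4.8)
The plan is to convert optimality into a first-order stationarity condition under phase variations, and then differentiate explicitly. The key observation is that replacing $\psi_j$ by $\psi_j\,\E^{2\pi\I\theta_j/d}$ leaves each modulus $|\psi_j|$, hence the distribution $p_Y^\psi$ and the entropy $H_\beta(p_Y^\psi)$, completely unchanged. Thus the entire phase-varied family shares the same value of $f_2=H_\beta(p_Y^\psi)$. If some choice of phases strictly lowered $f_1=H_\alpha(p_X^\psi)$, the new state would lie strictly to the south-west of $f(\psi)$ with the same second coordinate, contradicting the optimality of $\psi$ (by Theorem~\ref{pureNaff} we may indeed take $\psi$ pure). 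Hence $\psi$, corresponding to $\theta=0$, minimises $H_\alpha(p_X^\psi)$ over the family, so the gradient vanishes: $\partial H_\alpha(p_X^\psi)/\partial\theta_k=0$ for every $k$. The task then reduces to evaluating this derivative.

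I would compute it by the chain rule through the Fourier-transformed amplitudes. Writing $\hat\psi_j=\sum_m U^F_{jm}\psi_m(\theta)$ with $\psi_m(\theta)=\psi_m\,\E^{2\pi\I\theta_m/d}$, only the $m=k$ term carries $\theta_k$, so evaluating at $\theta=0$,
\begin{equation}
\frac{\partial\hat\psi_j}{\partial\theta_k}=\frac{2\pi\I}{d}\,U^F_{jk}\,\psi_k .
\end{equation}
Differentiating $|\hat\psi_j|^2=\hat\psi_j\,\overline{\hat\psi_j}$ and adding the conjugate term gives $A+\overline A=2\Real(A)$ with $A=\tfrac{2\pi\I}{d}U^F_{jk}\psi_k\overline{\hat\psi_j}$. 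Setting $B=U^F_{jk}\psi_k\overline{\hat\psi_j}$ and using $\Real(\I B)=-\Imag(B)$, the factor $\I$ turns this into an imaginary part,
\begin{equation}
\frac{\partial|\hat\psi_j|^2}{\partial\theta_k}=-\frac{4\pi}{d}\,\Imag\!\bigl(U^F_{jk}\,\psi_k\,\overline{\hat\psi_j}\bigr).
\end{equation}

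The final step is purely algebraic. By the chain rule $\partial H_\alpha/\partial\theta_k=\sum_j(\partial H_\alpha/\partial|\hat\psi_j|^2)\,(\partial|\hat\psi_j|^2/\partial\theta_k)$, and since each coefficient $\partial H_\alpha/\partial|\hat\psi_j|^2$ is a real number, it may be absorbed inside the imaginary part. Pulling the common factor $\psi_k$ out of the $j$-sum and substituting $U^F_{jk}=d^{-1/2}\E^{2\pi\I jk/d}$ yields
\begin{equation}
\frac{\partial H_\alpha(p_X^\psi)}{\partial\theta_k}=-\frac{4\pi}{d^{3/2}}\,\Imag\!\left(\psi_k\sum_{j=1}^d\frac{\partial H_\alpha(p_X^\psi)}{\partial|\hat\psi_j|^2}\,\overline{\hat\psi_j}\,\E^{2\pi\I jk/d}\right).
\end{equation}
Setting this to zero and discarding the nonzero prefactor gives precisely the extremality condition \eqref{eq:extremality}.

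The step deserving the most care is not a deep obstacle but the justification that optimality produces a genuine stationarity condition: one needs $H_\alpha(p_X^\psi)$ to be differentiable in the phases at the optimal point. This holds whenever all $|\hat\psi_j|^2>0$; if some component vanishes and $\alpha<1$ the partial derivative $\partial H_\alpha/\partial|\hat\psi_j|^2$ may diverge, and one should either restrict attention to the interior of the probability simplex or read \eqref{eq:extremality} as a condition on the support of $p_X^\psi$. The remaining manipulations are routine bookkeeping, the only delicate point being the correct extraction of the imaginary part (and its sign) when differentiating $|\hat\psi_j|^2$ with respect to the real phase $\theta_k$.
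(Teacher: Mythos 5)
Your proof is correct and takes essentially the same route as the paper: vary the phases at fixed $p_Y^\psi$, invoke optimality to get stationarity, and push the chain rule through $|\hat\psi_j|^2$; your result $\partial|\hat\psi_j|^2/\partial\theta_k=-\tfrac{4\pi}{d^{3/2}}\Imag\bigl(\psi_k\,\overline{\hat\psi_j}\,\E^{2\pi\I jk/d}\bigr)$ matches the paper's computation up to its prefactor, where the paper's stated factor $\tfrac{2\pi\I}{d^2}$ in front of an imaginary part is evidently a typo (the derivative of a real quantity must be real), and since only a nonzero real constant matters the conclusion is unaffected. Your explicit justification that optimality yields stationarity, and your caveat about differentiability when some $|\hat\psi_j|^2$ vanishes, are points the paper leaves implicit.
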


\begin{proof}
In order to optimize $H_{\alpha}(p_X^{\psi})$ we compute
\begin{equation}
\left.\frac{\partial H_{\alpha}(p_X^{\psi})}{\partial \theta_k}\right|_{\theta=0} = \left.\sum_{j=1}^d \frac{\partial H_{\alpha}(p_X^{\ph})}{\partial |\hat{\psi}_j|^2} \frac{\partial |\hat{\psi}_j|^2}{\partial \theta_k}\right|_{\theta=0} \overset{!}{=} 0 \ . \label{eq:proofextremality1}
\end{equation}
With $\omega:=\exp\big(\frac{2\pi\I}{d}\big)$ the Fourier transform of $\psi$ is defined as $\hat{\psi}_j:=\frac{1}{\sqrt{d}}\sum_{m=1}^d \psi_m \omega^{jm}$ and, hence,
\begin{equation}
|\hat{\psi}_j|^2=\frac{1}{d}\sum_{m,n=1}^d\ph_m\overline{\ph_n}\ \omega^{j(m-n)+\theta_m-\theta_n} \ .
\end{equation}
Therefore we have
\begin{align}
\left.\frac{\partial |\hat{\psi}_j|^2}{\partial \theta_k}\right|_{\theta=0} &= \left.\frac{1}{d}\sum_{m,n=1}^d \ph_m\overline{\ph_n}\omega^{j(m-n)+\theta_m-\theta_n} \right|_{\theta=0}\nonumber \\
&= \frac{2\pi \I}{d^2} \Imag\big(\ph_k \overline{\hat{\ph}_j} \omega^{jk}  \big) \label{eq:proofextremality2}
\end{align}
Combining \eqref{eq:proofextremality1} and \eqref{eq:proofextremality2} we obtain the desired statement.
\end{proof}
Any optimal state must necessarily satisfy the above criterion. This allows us to characterize a whole class of potentially optimal states:

\begin{lem}
	Let $\ph$ be a real-real symmetric state, i.e. a real state, $\ph\in\Rl^d$, satisfying the symmetry condition
	\begin{equation}\label{eq:rrsState}
	\ph(j)=\ph(d-j) \quad \forall j=1,...,d-1
	\end{equation}
	or, equivalently, a real state with real Fourier transform, $\hat{\ph}\in\Rl^d$. Then $\ph$ satisfies the extremality criterion \eqref{eq:extremality}.
\end{lem}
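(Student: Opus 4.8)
The plan is to substitute the real-real symmetric state $\ph$ directly into the extremality criterion \eqref{eq:extremality} and to exploit that, under the hypothesis, every factor in the sum is real except the exponential. Writing $c_j:=\partial H_\alpha(p_X^\ph)/\partial|\hat\ph_j|^2$, I would first record that $\ph\in\Rl^d$ makes $\ph_k$ real, that $\hat\ph\in\Rl^d$ makes $\overline{\hat\ph_j}=\hat\ph_j$ real, and that $H_\alpha$ is a real-valued function of the real arguments $|\hat\ph_j|^2=p_X^\ph(j)$, so each $c_j$ is real too. The bracket in \eqref{eq:extremality} then equals $\ph_k\sum_j c_j\hat\ph_j\,\E^{2\pi\I jk/d}$ with all of $\ph_k,c_j,\hat\ph_j$ real, and its imaginary part reduces to
\begin{equation}
\ph_k\sum_{j=1}^d c_j\,\hat\ph_j\,\sin\!\Big(\frac{2\pi jk}{d}\Big).
\end{equation}
Thus it suffices to prove that $S_k:=\sum_{j=1}^d c_j\hat\ph_j\sin(2\pi jk/d)$ vanishes for every $k$.

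The key observation I would isolate is that the sequence $a_j:=c_j\hat\ph_j$ is even under $j\mapsto d-j$, while the sine factor is odd. Evenness of $\hat\ph$ follows from \eqref{eq:rrsState}: a real even state has a real, even Fourier transform, so $\hat\ph_{d-j}=\hat\ph_j$ and hence $|\hat\ph_{d-j}|^2=|\hat\ph_j|^2$; that is, the probability vector $p_X^\ph$ is invariant under $j\mapsto d-j$. Because $H_\alpha$ is permutation invariant in its arguments, its gradient transforms covariantly under permutations, so evaluated at a fixed point of the involution $j\mapsto d-j$ the gradient components inherit the symmetry $c_{d-j}=c_j$. For the \Renyi\ entropy this is also directly visible from $c_j\propto|\hat\ph_j|^{2(\alpha-1)}/\sum_m|\hat\ph_m|^{2\alpha}$, which depends on $j$ only through $|\hat\ph_j|^2$. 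Combining, $a_{d-j}=c_{d-j}\hat\ph_{d-j}=c_j\hat\ph_j=a_j$.

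It then remains to pair terms $j$ and $d-j$ in $S_k$. Using $\sin(2\pi(d-j)k/d)=\sin(2\pi k-2\pi jk/d)=-\sin(2\pi jk/d)$ together with $a_{d-j}=a_j$, each pair cancels. The only self-paired indices are $j=d$ (equivalently $j\equiv0$) and, when $d$ is even, $j=d/2$; but there the sine equals $\sin(2\pi k)=0$ and $\sin(\pi k)=0$ respectively, so those terms vanish on their own. Hence $S_k=0$ for all $k$, and \eqref{eq:extremality} holds.

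I do not expect a genuine obstacle: the whole argument is an even/odd cancellation. The only points needing care are the index bookkeeping modulo $d$ (treating the fixed points $j\equiv0$ and $j=d/2$ correctly) and the verification that the gradient coefficients $c_j$ inherit the $j\mapsto d-j$ symmetry of the probabilities. The equivalence of the two formulations of ``real-real symmetric'' asserted in the statement---real $\ph$ with real $\hat\ph$ versus $\ph(j)=\ph(d-j)$---is the standard fact that reality of the Fourier transform corresponds to evenness of the signal, which I would insert as a one-line remark rather than belabor.
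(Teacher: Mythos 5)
Your proof is correct and takes essentially the same route as the paper's: what you verify by hand as an even/odd sine cancellation, the paper packages as the observation that $\xi_j = c_j\hat\ph_j$ is again a real-real symmetric vector (a componentwise real function of $\hat\ph$), so that its Fourier transform is real and the imaginary part vanishes. Your explicit check that the gradient coefficients $c_j$ inherit the $j\mapsto d-j$ symmetry (via permutation invariance of $H_\alpha$, or the explicit \Renyi\ formula) is a detail the paper leaves implicit, but both arguments rest on exactly the same two facts.
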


\begin{proof}
We first note a simple, but important property of real-real symmetric states: If $\ph$ is a real-real symmetric state and $\xi$ is a state with components $\xi_j=f(\ph_j)$, where $f$ is any function taking real numbers to real numbers, then $\xi$ is also a real-real symmetric state. For example, the Fourier transform of any real-real symmetric state is also real-real symmetric.

Now $\ph$ is assumed to be real-real symmetric. Hence, $\hat{\ph}$ is real-real symmetric. Define
\begin{equation}
\xi_j:=\frac{\partial H_{\alpha}(p_X^{\psi})}{\partial |\hat{\ph}_j|^2} \hat{\ph}_j
\end{equation}
and note that $\xi$ is also real-real symmetric. Importantly this implies that its Fourier transform, $\hat{\xi}$ is real.
We therefore have for all $k$
\begin{equation}
\Imag\left( \ph_k \sum_{j=1}^d \frac{\partial H_{\alpha}(p_X^{\psi})}{\partial |\hat{\ph}_j|^2} \overline{\hat{\ph}_j} \exp\Big(\frac{2\pi\I j k}{d} \Big)  \right) = \Imag\left( \sum_{j=1}^d \xi_j \exp\Big(\frac{2\pi\I j k}{d} \Big)  \right)=\Imag\left(\hat{\xi}\right) =0 \ ,
\end{equation}
which finishes the proof.
\end{proof}
	
\subsection{Simplest case: $d=2$}\label{sec:d2}
The results we presented so far are not sufficient to provide a complete characterisation of the curve of minimal entropy pairs. In what follows we therefore restrict to small dimension in order to reduce the complexity of the problem.

\begin{figure}[ht]\centering
	\includegraphics[width=\PICWIDTH]{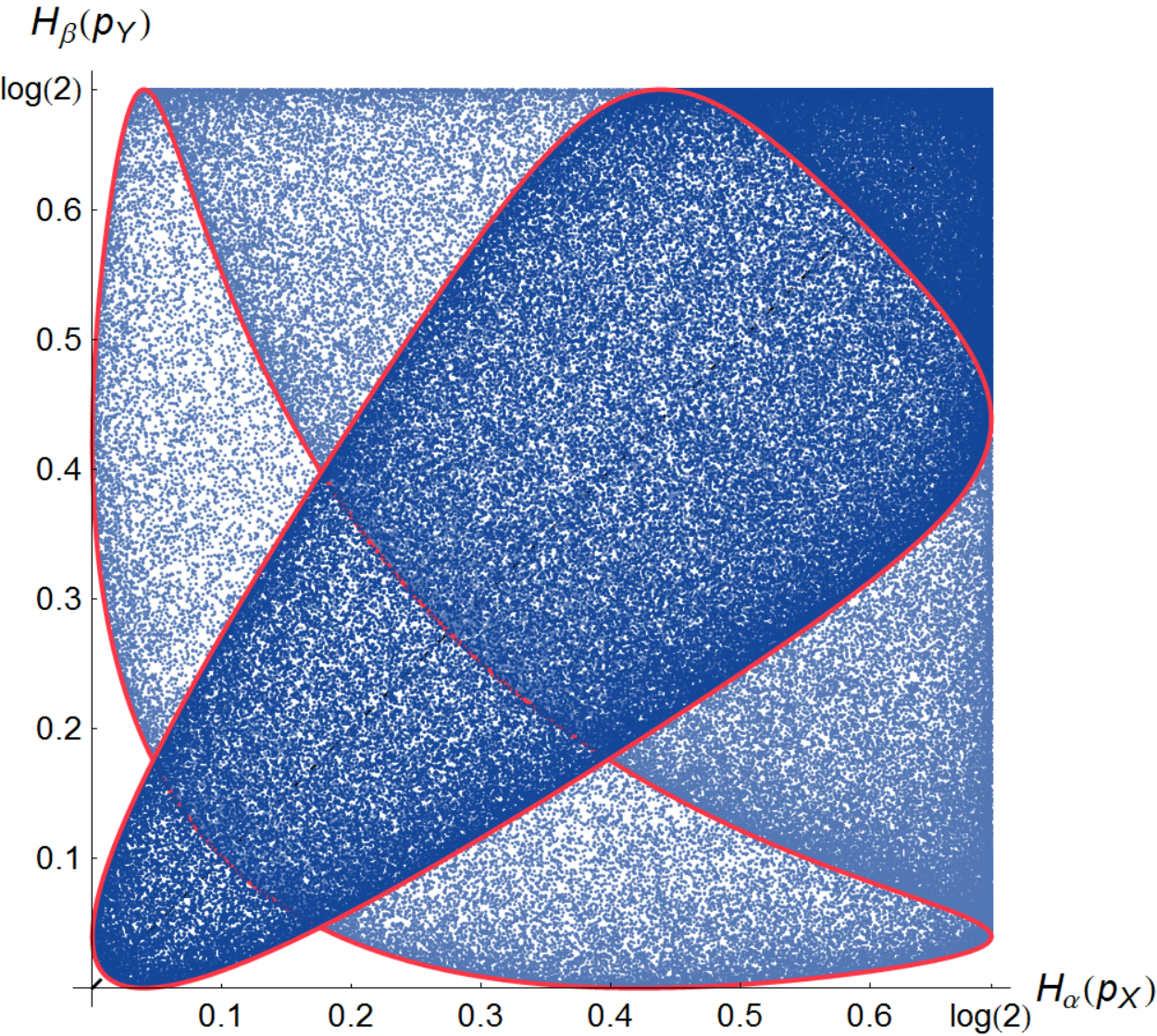}
	\caption{ \label{fig:d2arbitraryUnitaryArbitraryAlphaBeta} The optimal bound can be completely characterized in the qubit case (solid curves). The plot illustrates two entropy diagrams for randomly chosen unitary operators and entropy-pairs with $\alpha=\beta=10$ (light shading) and $\alpha=\beta=8$ (dark shading).}
\end{figure}

More concretely, we investigate the simplest case, where the dimension of the Hilbert space is $d=2$. In \Cite{sanches98,ghirardi2003} the authors characterized the curve of minimal entropy pairs for all unitary operators while restricting to the case of Shannon entropies. We now generalize their result to arbitrary pairs of \Renyi\ entropies: First we show that for each $2\times 2$ unitary operator $U$ there is a real unitary operator $\tilde{U}$ with the same entropy diagram. Then from Theorem \ref{thm:real} we can immediately infer that the lower bound can be parametrized by real states. More concretely, our aim is to show that any unitary operator, which we can always write in $\{x_i\}$ basis up to an (irrelevant) global phase as
\begin{equation}
U=\left(\begin{array}{cc}
\cos(\ph)&       \sin(\ph) \E^{-\rm  i\theta}      \\
-\sin(\ph) \E^{\I \theta}&       \cos(\ph)
\end{array}\right)  \ ,
\end{equation}
is equivalent to the matrix
\begin{equation}
\tilde{U}=\left(\begin{array}{cc}
\cos(\ph)&       \sin(\ph)     \\
-\sin(\ph)&       \cos(\ph)
\end{array}\right) \ .
\end{equation}
Indeed, the entropy diagram does not change if we first modify the unitary operator to $U'=U V$ if $V$ is a unitary operator satisfying $Vx_i=\exp(\I\ph_i)x_i$ for some phases $\ph_i$ and all $i$, since then for any state $\rho$ there exists a state $\rho'$ that yields the same pair of entropies. To see this, let $\rho'=V^{\dagger}\rho V$ to find that
\begin{equation}
p_X^{\rho'}(i)=\brAAket{x_i}{\rho'}{x_i}=\brAAket{x_i}{V^{\dagger}\rho V}{x_i}=\brAAket{x_i}{\rho}{x_i}=p_X^{\rho}(i)
\end{equation}
and
\begin{equation}
p_{Y'}^{\rho'}(j)=\brAAket{y'_j}{\rho'}{y'_j}=\brAAket{y_j}{V V^{\dagger}\rho V V^{\dagger}}{y_j}=\brAAket{y_j}{\rho}{y_j}=p_{Y}^{\rho}(j) \ .
\end{equation}
Now consider the modification
\begin{equation}
U'=\left(\begin{array}{cc}
\cos(\ph)&       \sin(\ph)       \\
-\sin(\ph) \E^{\rm  i\theta}&       \cos(\ph) \E^{\rm  i\theta}
\end{array}\right)
\end{equation}
obtained via the unitary operator
\begin{equation}
V_{\theta}=\left(\begin{array}{cc}
1&      0     \\
0&      \E^{\rm  i\theta}
\end{array}\right)
\end{equation}
However, $U'$ yields exactly the same probability distribution as $\tilde{U}$. Hence, by Theorem \ref{thm:real} the curve of minimal entropy pairs can be parametrized by real states.

Since the real states form a one-parameter family it is not difficult to check that the states
\begin{equation}\label{eq:optd2}
\psi=\bigl(\cos(\xi),\sin(\xi)\bigr) \ ,
\end{equation}where the range of $\xi$ is either $(0,\arccos(|U_{1,1}|)$ or $(\arccos(|U_{1,1}|),\pi/2)$ depending on whether $\arccos(|U_{1,1})|\in (\pi/4,3\pi/4)$ or not, parametrize the curve of minimal entropy pairs for all unitary operators and all \Renyi\ entropies. The problem is therefore completely solved in the simplest case $d=2$ (see \Fig~\ref{fig:d2arbitraryUnitaryArbitraryAlphaBeta}).

\subsection{Numerical sampling and conjectures}
\label{sec:conj}
In the previous section we characterized the optimal bound in the special case of dimension $d=2$. To the best of our knowledge the problem is unsolved for all other dimensions. Instead the authors of \Cite{englert2008} provide a conjecture stating that the curve of minimal entropies is traced out by states of the form
\begin{equation}\label{eq:Englert}
\psi= (\sqrt{p_2},\sqrt{p_2},...,,\sqrt{p_2},\sqrt{p_1})^{\mathsf T}\
\end{equation}
with $p_1 + (d-1) p_2=1$ in the case of complex Hadamard matrices and Shannon entropies. Due to the results of \Cite{sanches98,ghirardi2003} it is clear that this conjecture is correct for $d=2$. The conjecture also holds true in the case $d=3$ if we trust the numerics presented in \Fig~\ref{fig:out236}, where the solid curve directly corresponds to the  states \eqref{eq:Englert}. However, for $d=4$ we show that the conjecture already fails: For complex Hadamard matrices $c=1/\sqrt{d}$ and, hence, according to our analysis of equality in the MU bound there must be three distinct equality points, whereas the conjectured states only yield two equality points (see \Fig~\ref{fig:out248}).

\begin{figure}[t]\centering
	\includegraphics[width=\PICWIDTH]{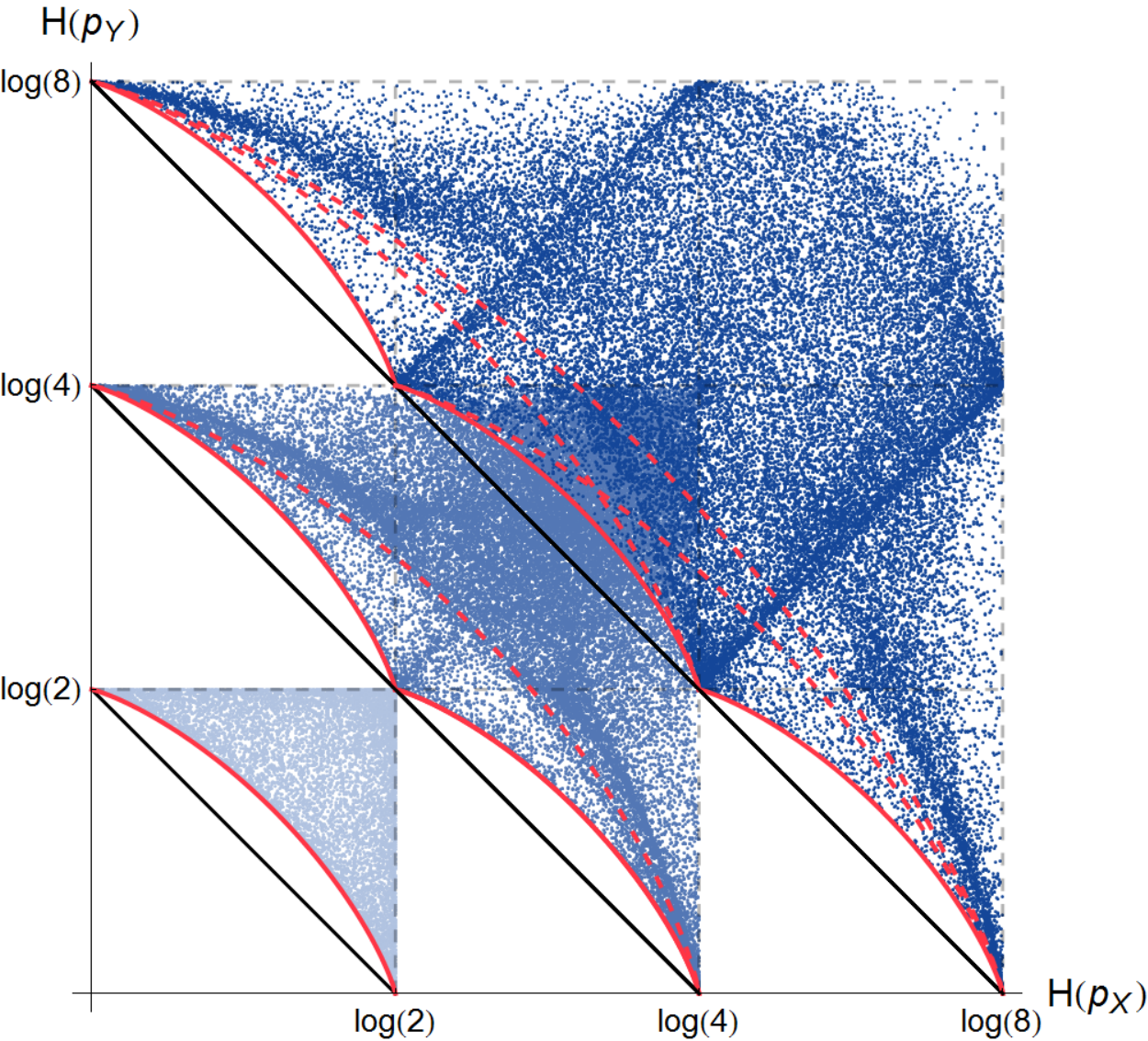}
	\caption{\label{fig:out248} Random sample of the entropy diagram for dimensions $d=2$ (light shading), $d=4$ (medium shading) and $d=8$ (dark shading) for Fourier related observables and Shannon entropies. Our results falsify a previous conjecture by Englert {\it et al.} (dashed curves). Instead the optimal bounds are given by the solid curves, which are obtained by applying Conjecture \ref{conj:productNaff} and Conjecture \ref{conj:Fourier}.}
\end{figure}
However, we present two different conjectures which, if correct, explain how the bound in \Fig~\ref{fig:out236} and \ref{fig:out248} can be obtained:
\begin{conjecture}
 \label{conj:productNaff}
	(Product states for matrices with product form)\\
	Let the unitary operator $U$ linking the two observables be a matrix of the form $U=U_1\otimes U_2$. Then for any state $\rho$ there exists a product state $\rho_1\otimes\rho_2$ with the same pair of entropies.
\end{conjecture}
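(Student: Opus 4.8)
The plan is to recast the statement as an identity between entropy ranges and then reduce to pure states. \Renyi\ entropy is additive on product distributions, and for $U=U_1\otimes U_2$ both the $X$- and the $Y$-basis factorize, so a product state $\rho_1\otimes\rho_2$ produces the product distributions $p_X^{\rho_1}\otimes p_X^{\rho_2}$ and $p_Y^{\rho_1}\otimes p_Y^{\rho_2}$, whence its entropy pair is the coordinatewise sum of the two factor pairs. Writing $S_m$ for the range of $\rho\mapsto(H_\alpha(p_X^\rho),H_\beta(p_Y^\rho))$ on the $m$-th factor, the set of entropy pairs attainable by product states is therefore exactly the Minkowski sum $S_1+S_2$. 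Since product states are states, $S_1+S_2$ is contained in the full range; the content of the conjecture is the reverse inclusion, i.e.\ that entanglement buys nothing in the entropy diagram.

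For the optimal bound it is enough to prove the slightly weaker statement that for every $\rho$ there is a product state $\sigma$ with $\sigma\eless\rho$ (the literal ``same pair'' claim then being the refinement obtained by restricting attention to optimal states). With this relaxation in hand, Theorem~\ref{pureNaff} lets me assume $\rho=\kettbra\psi$ is pure. I would then represent $\psi$ by its coefficient matrix $M$ in the $X$-basis, an array indexed by $(i_1,i_2)$, so that $p_X^\psi$ is the entrywise squared modulus of $M$ while $p_Y^\psi$ is the entrywise squared modulus of the two-sided transform $\hat M=U_1^{\dagger}M\,\overline{U_2}$. In this picture product states are precisely the rank-one $M$, and the Schmidt rank of $\psi$ across the $\mathcal{H}_1|\mathcal{H}_2$ cut is the rank of $M$.

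The core step is a rank-reduction argument: for $\psi$ of Schmidt rank $r\ge2$ I would produce $\psi'$ of rank $<r$ with $\psi'\eless\psi$, and iterate down to rank one. I would try to mimic the antipode-and-continuity mechanism of Theorems~\ref{pureNaff} and~\ref{thm:real}: deform $\psi$ within a one-parameter curve of states, use continuity to freeze $f_1=H_\alpha(p_X)$ at the value $H_\alpha(p_X^\psi)$, and use concavity together with the antipode construction to force $f_2=H_\beta(p_Y)$ not to increase, all while lowering the rank of $M$.

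The hard part is that this mechanism exploits convexity of the set one deforms within, whereas the stratum of Schmidt rank $\le r-1$ is \emph{not} convex, so one cannot simply take a convex component and read off a smaller-rank state. Equivalently, in the matrix picture the two entropy constraints cut out a codimension-two surface in the space of coefficient matrices, and one must show that a given $M$ can be joined, \emph{along} that surface, to the rank-one manifold; a dimension count makes their intersection plausible, but exhibiting the connecting path rigorously appears to require a genuinely two-dimensional (degree-theoretic) intermediate-value argument rather than the one-dimensional continuity used earlier. An alternative line of attack would be to seek a majorization relation producing a product distribution that is simultaneously at least as concentrated as $p_X^\psi$ and as $p_Y^\psi$; but subadditivity of entropy pushes the product of the marginals the wrong way, so the two requirements stand in direct tension. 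Resolving this tension is, I expect, exactly why the statement has resisted proof and is offered here only as a conjecture.
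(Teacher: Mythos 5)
You should first be clear about the benchmark here: the paper offers \emph{no} proof of this statement. It is posed as Conjecture~\ref{conj:productNaff} and supported only by numerical sampling (Figs.~\ref{fig:out236} and \ref{fig:out248}) and by its consistency with Corollary~\ref{lem:tensoreq}, which concerns only the finitely many MU-saturating points, not the whole diagram. So there is no paper proof to deviate from, and your proposal honestly does not claim to close the conjecture. Your preparatory reductions are all correct and match the paper's own framing: additivity of \Renyi\ entropies on product distributions does show that product states realize exactly the Minkowski sum $S_1+S_2$ of the factor ranges, so the conjecture is precisely the reverse inclusion; Theorem~\ref{pureNaff} legitimately reduces the monotone ($\sigma\eless\rho$) version to pure states (noting, as you do, that the literal ``same pair'' claim is strictly stronger, since Theorem~\ref{pureNaff} only yields domination, not equality of pairs); and your coefficient-matrix picture is right, including the formula $\hat M=U_1^{\dagger}M\,\overline{U_2}$ given the paper's convention $U_{ij}=\braket{x_i}{y_j}$, with product states being exactly the rank-one $M$.

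The genuine gap is the core rank-reduction step, and your diagnosis of why it fails is accurate and worth making even more concrete. Both Theorem~\ref{pureNaff} and Theorem~\ref{thm:real} run on the same fuel: a \emph{convex} decomposition whose components have the desired structure and which induces a convex decomposition of both outcome distributions --- the face boundary and antipode $\rho=(1-\lambda)\sigma+\lambda\sigma^{\antipode}$ in the mixed case, and the split $\psi=\sqrt{\lambda}\,v+\I\sqrt{1-\lambda}\,w$ in the real case, where reality of $U$ makes the cross terms cancel in both bases. For Schmidt rank there is no analogue of either ingredient: a pure state, being an extreme point, admits no convex decomposition at all, and the Schmidt decomposition $\psi=\sum_k\sqrt{\lambda_k}\,u_k\otimes v_k$ does \emph{not} give $p_{X}^{\psi}=\sum_k\lambda_k\,p_{X}^{u_k\otimes v_k}$ (nor for $Y$), because the cross terms survive in both bases. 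Your fallback remark also needs one caveat: for $\alpha=\beta=1$, Shannon subadditivity indeed shows that the canonical candidate --- the product $\rho_1\otimes\rho_2$ of the reduced states, whose distributions are the products of the marginals of $p_X^{\psi}$ and $p_Y^{\psi}$ --- has \emph{both} entropies at least as large, i.e.\ it sits to the north-east of $f(\psi)$, the wrong direction; but for $\alpha\neq1$ \Renyi\ entropies are not subadditive in general, so even that one-sided statement is unavailable and there is simply no control either way. In short: your proposal is a sound reformulation plus a correct identification of the obstruction, but it is not a proof, and the statement remains open exactly at the step you flag --- which is precisely why the paper states it as a conjecture.
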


The consequence of this our first conjecture is that the curve of minimal entropies for product form unitary operators in some composite dimension $d=d_1\, d_2$ is just comprised of tensor products of states that parametrize the curve in dimension $d_1$ and $d_2$, respectively. Indeed, from the additivity of the \Renyi\ entropy it then directly follows that a state $\rho_d=\rho_{d_1}\otimes \rho_{d_2}$ is optimal with respect to the unitary operator $U=U_{d_1}\otimes U_{d_2}$ if and only if the marginals $\rho_{d_1}$ and $\rho_{d_2}$ are optimal with respect to the unitary operators $U_{d_1}$ and $U_{d_2}$, respectively.
We note that this conjecture also agrees with our findings for the equality states, especially Corollary \ref{lem:tensoreq}.
\begin{conjecture}
 \label{conj:Fourier}
	(Decomposition of the Fourier matrix)\\
	Let the two observables be linked by the Fourier matrix $U^F_d$ of composite dimension $d=d_1\, d_2$. Then the entropy diagram does not change if we replace $U^F_d$ by $U^F_{d_1}\otimes U^F_{d_2}$.
\end{conjecture}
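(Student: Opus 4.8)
The plan is to exploit the classical Cooley--Tukey factorisation of the cyclic Fourier matrix, which expresses $U^F_d$ in terms of $U^F_{d_1}$ and $U^F_{d_2}$ up to permutations and a single diagonal ``twiddle'' matrix, and then to argue that this twiddle does not affect the achievable set of entropy pairs. By Theorem~\ref{pureNaff} we may restrict throughout to pure states, and we recall that the entropy diagram depends on the overlap matrix only modulo multiplication by permutation matrices and diagonal unitaries from either side; the $X$-distribution is $|\psi|^2$ and the $Y$-distribution is $|U\psi|^2$ componentwise. Writing the input index as $n=d_2 n_1+n_2$ and the output index as $k=k_1+d_1 k_2$ with $n_1,k_1\in\Ir_{d_1}$ and $n_2,k_2\in\Ir_{d_2}$, one has the factorisation
\begin{equation}
U^F_d = P_{\rm out}\,\bigl(U^F_{d_2}\otimes I_{d_1}\bigr)\,T\,\bigl(I_{d_2}\otimes U^F_{d_1}\bigr)\,P_{\rm in},
\end{equation}
where $P_{\rm in},P_{\rm out}$ are permutations and $T$ is diagonal with entries $\E^{2\pi\I\,n_2 k_1/d}$. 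The permutations are harmless because they merely relabel outcomes and hence preserve all \Renyi\ entropies, so everything hinges on the twiddle factor $T$.

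First I would dispose of the coprime case $\gcd(d_1,d_2)=1$. Here the Chinese Remainder Theorem gives a group isomorphism $\Ir_d\cong\Ir_{d_1}\times\Ir_{d_2}$, under which the bicharacter $\bic(j,k)=\E^{2\pi\I jk/d}$ factorises and the twiddle reduces to the identity. Consequently $U^F_d$ and $U^F_{d_1}\otimes U^F_{d_2}$ are related purely by permutations of rows and columns, so their entropy diagrams coincide term by term. This already settles factorisations such as $6=2\cdot3$.

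The hard part will be the non-coprime case, for instance $4=2\cdot2$ or $8=2\cdot4$, which are precisely the cases relevant to Fig.~\ref{fig:out248}. Here $T\neq I$, and the two matrices are genuinely inequivalent complex Hadamard matrices: they are the Fourier matrices of the non-isomorphic groups $\Ir_d$ and $\Ir_{d_1}\times\Ir_{d_2}$. Since $T$ sits \emph{between} the two partial-Fourier stages, it is neither a left nor a right diagonal of the overall overlap matrix and cannot be gauged away by the permutation-and-phase freedom; in effect $T$ implements a fractional cyclic shift, so no state-independent relabelling can match the two $Y$-distributions. The only remaining route is to prove equality of the \emph{diagrams} rather than of the matrices. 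Concretely, I would try to construct, for each pure $\psi$, a companion state $\psi'$ with $|\psi'|^2$ a permutation of $|\psi|^2$ (so that $H_\alpha(p_X)$ is unchanged) and $|(U^F_{d_1}\otimes U^F_{d_2})\psi'|^2$ a permutation of $|U^F_d\psi|^2$ (so that $H_\beta(p_Y)$ is unchanged), together with the reverse assignment. By Theorem~\ref{pureNaff} it suffices to produce such a bijection on the optimal boundary only, so I would feed in the variational information obtained above, namely that optimal Fourier states may be taken real and symmetric under $j\mapsto d-j$ (hence with real Fourier transform), and attempt to show that on this restricted class the phases introduced by $T$ either cancel or can be reabsorbed by a compensating real-symmetric phase. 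A useful consistency check, and a guide for the construction, is that the two diagrams must at least share all of their equality points: by Theorem~\ref{thm:equality} and Corollary~\ref{lem:fouriereq} the MU-saturating points of $\Ir_d$ are the pairs $(\log a,\log b)$ with $ab=d$, and by the structure theorem $\Ir_{d_1}\times\Ir_{d_2}$ has subgroups of every order dividing $d$ and therefore exactly the same equality points (compare also Corollary~\ref{lem:tensoreq}). Establishing the matching bijection away from these special points is the genuine obstacle, and it is what currently forces us to leave the statement as a conjecture rather than a theorem.
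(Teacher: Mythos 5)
The first thing to say is that the paper contains no proof of this statement: it is advanced as Conjecture~\ref{conj:Fourier}, supported only by the numerical sampling in \Fig~\ref{fig:out236} and \Fig~\ref{fig:out248}, so there is no hidden argument your attempt could be matched against --- and your proposal, as you concede in your final sentence, is not a proof either. That said, the partial content you do establish is sound and has genuine value. The Cooley--Tukey factorisation is correct (with $n=d_2n_1+n_2$, $k=k_1+d_1k_2$ one indeed gets $\E^{2\pi\I nk/d}=\E^{2\pi\I n_1k_1/d_1}\,\E^{2\pi\I n_2k_1/d}\,\E^{2\pi\I n_2k_2/d_2}$, confirming your twiddle entries), and permutations of rows and columns manifestly preserve both outcome distributions. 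Your disposal of the coprime case via the Chinese Remainder Theorem is also correct and fits the paper's own group-theoretic framework: for $\gcd(d_1,d_2)=1$ the groups $\Ir_d$ and $\Ir_{d_1}\times\Ir_{d_2}$ are isomorphic, their Fourier matrices are permutation-equivalent (the residual multiplications by units modulo $d_i$ are again permutations), and the diagrams coincide trivially. This sharpens the conjecture by showing its entire content sits in the non-coprime factorisations such as $4=2\cdot2$ and $8=2\cdot4$; your observation that there the two matrices are genuinely inequivalent Hadamard matrices is also right (for $d=4$, $U^F_2\otimes U^F_2$ is real while $U^F_4$ contains fourth roots of unity, and row/column phases cannot remove this, as a Haagerup-type invariant shows).

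The gap is exactly where you locate it, but it is worth naming two concrete defects in the route you sketch for closing it. First, you propose to feed in that optimal Fourier states ``may be taken real and symmetric under $j\mapsto d-j$'': in the paper this is \emph{not} a theorem but Conjecture~\ref{conj:realrealNaff}; the variational lemma only shows that real-real symmetric states satisfy a \emph{necessary} extremality criterion, not that they suffice to trace the optimal bound. So even if your bijection on the boundary were constructed, the result would prove Conjecture~\ref{conj:Fourier} only conditionally on another open conjecture --- a circularity you should flag. Second, the construction itself is missing its key mechanism: restricting to the boundary is legitimate (the diagram is by definition a monotone closure, hence determined by its optimal curve), but nothing in your argument controls how the twiddle phases $\E^{2\pi\I n_2k_1/d}$ act there; in the split indexing a real-real symmetric $\psi$ does not remain real-real symmetric after applying $T$, and no cancellation or reabsorption of these phases is exhibited. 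Your consistency check on equality points --- both diagrams share the points $(\log a,\log b)$ with $ab=d$, by Theorem~\ref{thm:equality}, Corollary~\ref{lem:fouriereq} and the structure theorem --- is correct but is only a necessary condition. In short: an honest and usefully organised reduction that settles the coprime case and isolates the true obstruction, but the statement remains, both in the paper and in your attempt, a conjecture.
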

		
The consequence of this second conjecture is that, although the Fourier matrix can, in general, not be decomposed into a tensor product of Fourier matrices of smaller dimension, the entropy diagram (and hence the curve of minimal entropy pairs) does not change under this replacement. Hence, if this conjecture were true, we could apply Conjecture \ref{conj:productNaff} and characterize the curve of minimal entropy pairs by states of product form, where the marginals parametrize the optimal bound in the respective smaller dimension.

As an example let us consider Fourier related observables in dimension $d=4$. Employing both conjectures we know that it suffices to consider only the problem of characterising the optimal bound for Fourier related observables in dimension $d=2$. But for such observables we already characterized the bound completely (see Sect.~\ref{sec:d2}) and, hence, the optimal bound in $d=4$ is traced out by product states with marginals given by \eqref{eq:optd2}. Indeed, this result agrees with the random sample (\Fig~\ref{fig:out248}). In \Fig~\ref{fig:out236} we also show other examples, where the numerics validate the two conjectures above.

Note that the above conjectures are statements about the case of composite dimension, effectively stating that for a large class of unitary operators one only needs to solve the problem in prime dimension. The prime-dimensional case, however, still remains a hard problem. But we can provide two further conjectures that, if correct, vastly reduce the complexity of calculating the optimal bound in these instances:

\begin{conjecture}
 \label{conj:independence}
	(Independence of the optimal states of $(\alpha,\beta)$)\\
If $\rho$ is an optimal state for any unitary operator and any $\alpha,\beta>\frac12$ satisfying the duality relation \eqref{eq:duality}, then $\rho$ is also an optimal state for all other dual pairs.
\end{conjecture}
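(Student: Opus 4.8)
The plan is to reduce the statement to a purely \emph{geometric} question about probability distributions that is manifestly independent of $(\alpha,\beta)$, and then to try to show that $(\alpha,\beta)$-optimality coincides with that geometric notion. By Theorem~\ref{pureNaff} it suffices to work with pure states, so I would phrase everything in terms of the pair of distributions $\bigl(p_X^\psi,p_Y^\psi\bigr)$. The crucial observation is that the set
\begin{equation}
\mathcal{R}=\bigl\{\,(p_X^\psi,p_Y^\psi)\ :\ \psi\ \text{pure}\,\bigr\}
\end{equation}
is fixed once $X,Y$ are chosen, so it carries no dependence on the \Renyi\ parameters; all of the $(\alpha,\beta)$-dependence sits in the map $(p,q)\mapsto\bigl(H_\alpha(p),H_\beta(q)\bigr)$. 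The goal is then to show that the subset of $\mathcal{R}$ whose image lies on the lower-left boundary is the same for every dual pair, which would generalise Corollary~\ref{eqindalpha} from equality states to arbitrary optimal states.

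The key tool I would use is \emph{majorization}. Recall that if $p'$ majorizes $p$, then $H_\alpha(p')\le H_\alpha(p)$ simultaneously for \emph{all} $\alpha\in[\tfrac12,\infty]$, by Schur concavity of the \Renyi\ entropies. Call a pair $(p,q)\in\mathcal{R}$ \emph{majorization-optimal} if there is no $(p',q')\in\mathcal{R}$ with $p'$ majorizing $p$ and $q'$ majorizing $q$, at least one of the two majorizations being strict; this notion is visibly $(\alpha,\beta)$-independent. One inclusion is then immediate and $\alpha$-uniform: if $(p,q)$ is \emph{not} majorization-optimal, the dominating pair $(p',q')$ satisfies $H_\alpha(p')\le H_\alpha(p)$ and $H_\beta(q')\le H_\beta(q)$, with a strict inequality inherited from the strict majorization, for every dual pair. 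Hence $(p,q)$ fails to be optimal for every dual pair, and we conclude that for each fixed $(\alpha,\beta)$ every $(\alpha,\beta)$-optimal state is majorization-optimal. In particular all the optimal sets are contained in one and the same $(\alpha,\beta)$-independent set.

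What remains --- and what I expect to be the genuine obstacle --- is the reverse inclusion: that a majorization-optimal pair is already $(\alpha,\beta)$-optimal for each individual dual pair. Concretely, one must rule out that some $(p',q')\in\mathcal{R}$ beats $(p,q)$ merely in \Renyi\ value, $H_\alpha(p')\le H_\alpha(p)$ and $H_\beta(q')\le H_\beta(q)$, \emph{without} majorizing it; if this could happen for some $\alpha$ but not in the majorization order, the optimal sets would genuinely depend on $\alpha$. Proving it cannot happen requires fine information about the shape of $\mathcal{R}$ that does not follow from concavity alone. I would attack it by combining the variational criterion of the preceding subsection --- whose stationarity equation is $\alpha$-dependent in form but is solved, for instance, by the $\alpha$-independent real-real-symmetric states of the last lemma --- with the explicit $\alpha$-independent structure of the boundary endpoints supplied by Theorem~\ref{thm:equality}. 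The heart of the difficulty is that, unlike for equality states, no closed-form description of general optimal states is available, so the reverse inclusion cannot be read off from an explicit parametrisation and must instead be forced from the geometry of $\mathcal{R}$; this is precisely the gap that keeps the statement at the level of a conjecture.
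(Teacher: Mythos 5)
First, a point of calibration: the statement you were asked about is Conjecture~\ref{conj:independence}, and the paper offers \emph{no proof} of it --- only the remark that it extends Corollary~\ref{eqindalpha} and the numerical evidence of \Fig~\ref{fig:optimalityAlpha12}, where differently shaped optimal bounds are traced out by the same states. So your honest conclusion that a gap remains is exactly consistent with the paper's own status of the claim, and your proposal should be judged as a reduction framework rather than against a reference proof. The half you actually establish is sound and goes slightly beyond what the paper records: since $H_\alpha$ is strictly Schur-concave for every finite $\alpha>\tfrac12$ (including $\alpha=1$), a strictly majorizing pair $(p',q')$ forces $H_\alpha(p')\le H_\alpha(p)$, $H_\beta(q')\le H_\beta(q)$ with at least one strict inequality for \emph{every} dual pair, so each $(\alpha,\beta)$-optimal set sits inside your $(\alpha,\beta)$-independent majorization-optimal set. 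Note that this is also where the exclusion of $\{\alpha,\beta\}=\{\tfrac12,\infty\}$ earns its keep: $H_\infty$ is Schur-concave but not strictly so, and indeed the paper shows that at this boundary a continuum of extra equality states appears, so your strictness step would genuinely fail there.

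The conceptual flaw is in how you frame what remains. Your ``reverse inclusion'' --- that every majorization-optimal pair is $(\alpha,\beta)$-optimal for each dual pair --- is strictly \emph{stronger} than Conjecture~\ref{conj:independence} and is plausibly false even if the conjecture is true. The conjecture only asserts that the optimal sets for different dual pairs coincide \emph{with each other}; it does not assert they exhaust the majorization frontier. Since majorization is a partial order, for $d\ge3$ one expects pairs $(p,q)$ and $(p',q')$ that are incomparable under majorization yet strictly ordered by $\bigl(H_\alpha,H_\beta\bigr)$ for a given $\alpha$, in which case the majorization-optimal set is strictly larger than any single entropy frontier, your reverse inclusion fails, and yet the conjecture may still hold because the frontiers could coincide as a common proper subset. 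So even closing the gap you name would be aiming at the wrong target; the right target is pairwise coincidence of the frontiers, e.g.\ showing that the minimizer of $H_\alpha(p_X^\rho)$ on each level set of $H_\beta(p_Y^\rho)$ is an $(\alpha,\beta)$-independent family of states --- which is what the paper's numerics suggest and what Conjecture~\ref{conj:realrealNaff} would supply in the Fourier case. Your instinct to combine the variational criterion with the $\alpha$-independent equality structure of Theorem~\ref{thm:equality} is a reasonable research direction, but as stated the stationarity condition is only necessary, so it cannot by itself single out the frontier.
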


This conjecture can be seen as an extension of Corollary \ref{eqindalpha}. Note that we again excluded the extremal case $\{\alpha,\beta\}=\{1/2,\infty\}$ for the same reasons as explained in Sect.~\ref{sec:equality}. In \Fig~\ref{fig:optimalityAlpha12} the optimal bounds, although differently shaped, are traced out be the same states which supports Conjecture \ref{conj:independence}.

The last conjecture only considers the case of observables linked by the Fourier matrix.

\begin{conjecture}
 \label{conj:realrealNaff}
	(Sufficiency of real-real symmetric states for Fourier)\\
	If $\rho$ is an optimal state for the Fourier case, then there is a real-real symmetric state $\sigma$ as given by \eqref{eq:rrsState} with the same entropy pair.
\end{conjecture}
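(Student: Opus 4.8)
The plan is to reduce, without changing the entropy pair, an arbitrary optimal state to one that is simultaneously even and real, since by the remark following \eqref{eq:rrsState} these two properties together are exactly the real-real symmetric condition. By Theorem~\ref{pureNaff} I may start from a pure optimal state $\psi$. The entropy pair $\bigl(H_\alpha(p_X^\psi),H_\beta(p_Y^\psi)\bigr)$ is invariant under the symmetry group of the Fourier diagram generated by the cyclic translations $j\mapsto j+a$, the modulations $\psi_j\mapsto\E^{2\pi\I bj/d}\psi_j$, the reflection $R\colon\psi_j\mapsto\psi_{-j}$, and complex conjugation $C\colon\psi\mapsto\barr\psi$, since each of these merely permutes or leaves invariant $p_X^\psi$ and $p_Y^\psi$. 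A state is real-real symmetric precisely when it is fixed, up to a global phase, by both $R$ and $C$, so the task is to symmetrise $\psi$ with respect to $R$ and $C$. I would carry this out in two stages: first make the state even (reflection-symmetric), then make the even state real.

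The second stage is the tractable one and essentially reuses Theorem~\ref{thm:real}. Since $U^F$ commutes with $R$, it preserves the even subspace $\mathcal{H}_{\mathrm e}=\{\psi\,|\,R\psi=\psi\}$, and in the real orthonormal basis $(\ket{j}+\ket{-j})/\sqrt2$ of $\mathcal{H}_{\mathrm e}$ the restriction $U^F|_{\mathcal{H}_{\mathrm e}}$ has the real entries $\propto\cos(2\pi jk/d)$, i.e.\ it is a real orthogonal matrix. Consequently, writing an even pure state as $\psi=a+\I b$ with $a,b$ real and even, both $U^F a$ and $U^F b$ are again real, so $p_X^\psi(j)=a_j^2+b_j^2$ and $p_Y^\psi(j)=(U^F a)_j^2+(U^F b)_j^2$ split convexly exactly as in the proof of Theorem~\ref{thm:real}. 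Since $R$ is a real permutation it commutes with $\Real$ and $\Imag$, so the $\E^{\I t}$ homotopy of that proof stays inside $\mathcal{H}_{\mathrm e}$ and produces a real and even --- hence real-real symmetric --- state $\sigma$ with $\sigma\eless\psi$. I expect this stage to need only routine adaptation.

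The first stage is where I expect the real difficulty, and it is presumably the reason the statement is left as a conjecture. Reflection produces a second optimal state $R\psi$ with exactly the same entropy pair, so one would like to interpolate between $\psi$ and $R\psi$ and catch an even state on the way. Decomposing $\psi=\cos\theta\,u+\sin\theta\,v$ into its normalised even part $u$ and odd part $v$ and rotating the relative phase, $\psi(t)=\cos\theta\,u+\E^{\I t}\sin\theta\,v$ with $\psi(0)=\psi$ and $\psi(\pi)=R\psi$, one checks that the reflection-symmetrised distribution $\tfrac12\bigl(p_X^{\psi(t)}(j)+p_X^{\psi(t)}(-j)\bigr)$ and its momentum analogue are independent of $t$. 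Unfortunately this controls only the symmetrised distributions and not the true entropies $H_\alpha(p_X^{\psi(t)})$, and moreover $\psi(t)$ is genuinely even only in the trivial case $v=0$; there is no evident path through states $\eless\psi$ terminating at an even state. Overcoming this obstacle seems to require either a majorisation relation between $p_X^\psi$ and its reflection-symmetrisation that holds precisely at optimal points, or a direct variational argument showing that the critical points singled out by the extremality criterion \eqref{eq:extremality} --- which by the Lemma following it are met by real-real symmetric states --- already exhaust the optimal bound. Establishing either ingredient is the crux, and I do not see how to complete it with the tools developed so far, which is consistent with the result being offered only as a conjecture.
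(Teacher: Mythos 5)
This statement is presented in the paper only as Conjecture~\ref{conj:realrealNaff}; the paper contains \emph{no proof} of it. The only support given is indirect: the lemma showing that real-real symmetric states satisfy the extremality criterion \eqref{eq:extremality} (which the paper itself stresses is merely necessary, not sufficient), the numerical sampling of \Fig~\ref{fig:conjectures}, and consistency with Conjecture~\ref{conj:productNaff}. So there is no paper proof for your attempt to be measured against, and your attempt, by your own admission, does not close the gap either: your ``first stage'' --- producing from an arbitrary optimal pure state an even (reflection-symmetric) state $\sigma\eless\psi$ --- is precisely the content of the conjecture, and your interpolation $\psi(t)=\cos\theta\,u+\E^{\I t}\sin\theta\,v$ controls only the reflection-symmetrised distributions, not the entropies themselves, exactly as you note. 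Your diagnosis of the obstruction is accurate, and your observation that one would need either a majorisation statement at optimal points or sufficiency of the critical points of \eqref{eq:extremality} correctly identifies why the authors could not upgrade the necessary criterion to a proof.

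Your ``second stage'' is essentially sound and is a genuine (if modest) piece of progress beyond what the paper states explicitly: $U^F$ commutes with the reflection $R$, preserves the even subspace, and in the real basis $(\ket{j}+\ket{-j})/\sqrt2$ (with the $R$-fixed points $j=0$ and, for even $d$, $j=d/2$ treated separately) has real entries proportional to $\cos(2\pi jk/d)$, so for even $\psi=a+\I b$ the distributions split convexly as $p_X^\psi(j)=a_j^2+b_j^2$ and $p_Y^\psi(k)=(U^Fa)_k^2+(U^Fb)_k^2$, and the $\E^{\I t}$ homotopy of Theorem~\ref{thm:real} can be rerun inside the even subspace. Two caveats: you must actually rerun that proof rather than invoke Theorem~\ref{thm:real} as stated, since $U^F$ itself is not a real matrix and the theorem's hypothesis is about the unitary linking the observables; and your characterisation ``fixed up to a global phase by both $R$ and $C$'' is slightly too broad --- $R\psi=-\psi$ (odd real states) is admitted by that phrasing but does not satisfy \eqref{eq:rrsState}, although such states are real with purely imaginary Fourier transform and so are real-real symmetric only after the global phase is absorbed on the Fourier side. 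In sum: no error of reasoning, an honest and correctly located gap, but the statement remains unproven --- consistent with its status in the paper as a conjecture.
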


\begin{figure}[t]\centering
	\includegraphics[width=\PICWIDTH]{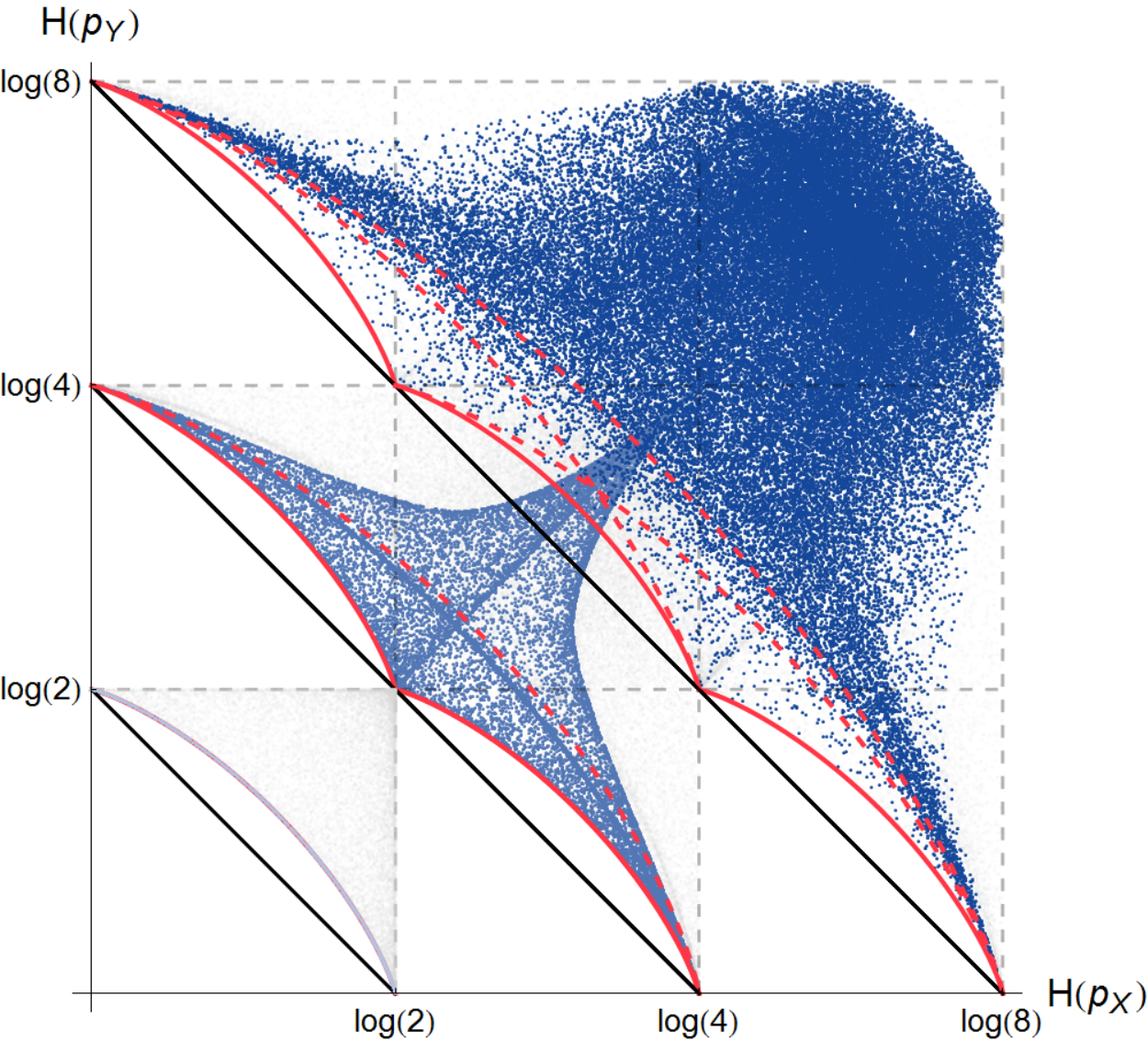}
	\caption{ \label{fig:conjectures}  Random sample of the entropy diagram for real-real symmetric states in dimensions $d=2$ (light shading), $d=4$ (medium shading) and $d=8$ (dark shading) for Fourier related observables and Shannon entropies. Restricting to real-real symmetric states does not yield the complete entropy diagram (grey), but seems to be sufficient to characterize the optimal bound.}
\end{figure}
According to this conjecture it is sufficient to analyse the problem only for real-real symmetric states, which yields a huge simplification in both analytical and numerical treatments of the problem. As an example consider Fourier related observables in dimension $d=3$. If Conjecture \ref{conj:realrealNaff} were correct, we already knew a characterisation of the optimal bound, since the real-real symmetric states in this case form a one-parameter family and therefore trace out the desired curve. Indeed, for $d=3$ the real-real symmetric states coincide with the states conjectured by \Cite{englert2008} which, as mentioned above, trace out the bound if we trust numerics. \Fig~\ref{fig:conjectures} also suggests the validity of Conjecture \ref{conj:realrealNaff}.

Furthermore, we note that real-real symmetric states are closed under the tensor product, in the sense that any tensor product of two real-real symmetric states is again a real-real symmetric state. Hence, Conjecture \ref{conj:productNaff} and Conjecture \ref{conj:realrealNaff} agree with each other.

\section{Conclusion and outlook}
We investigated the curve of minimal entropies that completely describes the entropic uncertainty tradeoff between two observables. We showed that the lower bound on the sum of two entropies as given by the Maassen-Uffink uncertainty relation is not optimal in almost all cases and hence does not correspond to the curve of minimal entropies. To show this, we presented a novel proof of the MU bound that allowed us to analyse the case of equality in the uncertainty relation.

In order to characterize the curve of minimal entropies, we provided three main results: First, we showed that the optimal bound can be traced out by pure states. Second, the optimal bound for real-valued unitary operators can be traced out by real-valued pure states. And last, we presented an extremality criterion, which any optimal state must satisfy. Numerical and analytical results for the case of small dimension suggest a number of conjectures that, if true, lead to a drastic reduction of the optimisation space. The optimal lower bound could then be computed.

\section*{Acknowledgements}
The work in Singapore is funded by the Singapore Ministry of Education (partly through the Academic Research Fund Tier 3 MOE2012-T3-1-009) and the National Research Foundation of Singapore.
F.F. acknowledges support from LUH GRK 1463 and the Japan Society for the Promotion of Science (JSPS) by KAKENHI grant No. 24-02793.
R.S. acknowledges support from the BMBF funded network Q.com-Q.

\bibliography{ourbib}

\begin{thebibliography}{10}

\bibitem{Heisenberg1927}
W.~Heisenberg.
\newblock {\"U}ber den anschaulichen {I}nhalt der quantentheoretischen
  {K}inematik und {M}echanik.
\newblock {\em Z. Phys.}, 43:172--198, 1927.

\bibitem{Kennard1927}
E.~H. Kennard.
\newblock Zur {Q}uantenmechanik einfacher {B}ewegungstypen.
\newblock {\em Z. Phys.}, 44:326--352, 1927.

\bibitem{Weyl1928}
H.~Weyl.
\newblock {\em Gruppentheorie und {Q}uantenmechanik}.
\newblock Hirzel, Leipzig, 1928.

\bibitem{Robertson1929}
H.~P. Robertson.
\newblock The uncertainty principle.
\newblock {\em Phys. Rev.}, 34:163--164, 1929.

\bibitem{Busch2014}
P.~Busch, P.~Lahti, and R.~F. Werner.
\newblock Measurement uncertainty relations.
\newblock {\em J. Math. Phys.}, 55:042111, 2014.
\newblock {arXiv}:1312.4392.

\bibitem{ColesFurrer2013}
P.~J. Coles and F.~Furrer.
\newblock State-dependent approach to entropic
  measurement{\textendash}disturbance relations.
\newblock {\em Phys.\ Lett.\ A}, 379:105--112, 2015.
\newblock {arXiv}:1311.7637.

\bibitem{RenesScholz2014}
J.~M. Renes and V.~B. Scholz.
\newblock Operationally-motivated uncertainty relations for joint measurability
  and the error-disturbance tradeoff.
\newblock 2014.
\newblock {arXiv}:1402.6711.

\bibitem{englert2008}
B.-G. Englert, D.~Kaszlikowski, L.~C. Kwek, and W.~H. Chee.
\newblock Wave-particle duality in multi-path interferometers: General concepts
  and three-path interferometers.
\newblock {\em Int.\ J. Quant.\ Inf.}, 6:129--157, 2008.
\newblock {arXiv}:0710.0179.

\bibitem{Tomamichel2012}
M.~Tomamichel, C.~C.~W. Lim, N.~Gisin, and R.~Renner.
\newblock Tight finite-key analysis for quantum cryptography.
\newblock {\em Nature Commun.}, 3:634, 2012.
\newblock {arXiv}:1103.4130.

\bibitem{Deutsch1983}
D.~Deutsch.
\newblock Uncertainty in quantum measurements.
\newblock {\em Phys.~Rev.~Lett.}, 50:631--633, 1983.

\bibitem{Maassen1988}
H.~Maassen and J.~B.~M. Uffink.
\newblock Generalized entropic uncertainty relations.
\newblock {\em Phys.~Rev.~Lett.}, 60:1103--1106, 1988.

\bibitem{berta2010}
M.~Berta, M.~Christandl, R.~Colbeck, J.~M. Renes, and R.~Renner.
\newblock The uncertainty principle in the presence of quantum memory.
\newblock {\em Nature Phys.}, 2010.
\newblock {arXiv}:0909.0950.

\bibitem{Ivanovic1992}
I.~D. Ivanovic.
\newblock An inequality for the sum of entropies of unbiased quantum
  measurements.
\newblock {\em J.\ Phys.\ A}, 25:L363, 1992.

\bibitem{Sanchez95}
J.~S{\'a}nchez-Ruiz.
\newblock Improved bounds in the entropic uncertainty and certainty relations
  for complementary observables.
\newblock {\em Phys.\ Lett.\ A}, 201:125--131, 1995.

\bibitem{BallesterWehner2007}
M.~A. Ballester and S.~Wehner.
\newblock Entropic uncertainty relations and locking: tight bounds for mutually
  unbiased bases.
\newblock {\em Phys.\ Rev.\ A}, 75, 2007.
\newblock {arXiv}:quant-ph/0606244.

\bibitem{Wu2009}
Entropic uncertainty relation for mutually unbiased bases.
\newblock 79.

\bibitem{WehnerWinter2010}
S.~Wehner and A.~Winter.
\newblock Entropic uncertainty relations {--} a survey.
\newblock {\em New J. Phys.}, 12:025009, 2010.
\newblock {arXiv}:0907.3704.

\bibitem{Adamczak2014}
R.~Adamczak, R.~Lata{\l}a, Z.~Pucha{\l}a, and K.~{\.Z}yczkowski.
\newblock Asymptotic entropic uncertainty relations.
\newblock 2014.
\newblock {arXiv}:1412.7065.

\bibitem{Puchala2013}
Z.~Pucha{\l}a, {\L}.~Rudnicki, and K.~{\.Z}yczkowski.
\newblock Majorization entropic uncertainty relations.
\newblock {\em J.\ Phys.\ A}, 46:272002, 2013.
\newblock {arXiv}:1304.7755.

\bibitem{Puchala2015}
Z.~Pucha{\l}a, {\L}.~Rudnicki, K.~Chabuda, M.~Paraniak, and K.~{\.Z}yczkowski.
\newblock Certainty relations, mutual entanglement and non-displacable
  manifolds.
\newblock 2015.
\newblock {arXiv}:1506.07709.

\bibitem{Rudnicki2014}
{\L}.~Rudnicki, Z.~Pucha{\l}a, and K.~{\.Z}yczkowski.
\newblock Strong majorization entropic uncertainty relations.
\newblock {\em Phys.\ Rev.\ A}, 89, 2014.
\newblock {arXiv}:1402.0129.

\bibitem{krishna2001}
M.~Krishna and K.~R. Parthasarathy.
\newblock An entropic uncertainty principle for quantum measurements.
\newblock {\em Ind.\ J. \ Stat. A}, 64\ No.3:842--852, 2001.
\newblock {arXiv}:quant-ph/0110025.

\bibitem{Rastegin2010}
A.~E. Rastegin.
\newblock R{\'e}nyi formulation of the entropic uncertainty principle for
  {POVMs}.
\newblock {\em J.\ Phys.\ A}, 43:155302, 2010.

\bibitem{ColesPiani2014}
P.~J. Coles and M.~Piani.
\newblock Improved entropic uncertainty relations and information exclusion
  relations.
\newblock {\em Phys.\ Rev.\ A}, 89, 2014.
\newblock {arXiv}:1307.4265.

\bibitem{Zozor2014}
S.~Zozor, G.~M. Bosyk, and M.~Portesi.
\newblock General entropy-like uncertainty relations in finite dimensions.
\newblock {\em J.\ Phys.\ A}, 47:495302, 2014.
\newblock {arXiv}:1311.5602.

\bibitem{sanches98}
J.~S{\'a}nches-Ruiz.
\newblock Optimal entropic uncertainty relation in two-dimensional {H}ilbert
  space.
\newblock {\em Phys.\ Lett.\ A}, 244:189--195, 1998.

\bibitem{ghirardi2003}
G.~Ghirardi, L.~Marinatto, and R.~Romano.
\newblock An optimal entropic uncertainty relation in a two-dimensional
  {H}ilbert space.
\newblock {\em Phys.\ Lett.\ A}, 317:32--36, 2003.

\bibitem{maassen2007}
H.~Maassen.
\newblock The discrete entropic uncertainty relation.
\newblock Talk given in Leyden University. Slides of a later version available
  from the author's website, 2007.

\bibitem{coles2011}
P.~J. Coles, L.~Yu, and M.~Zwolak.
\newblock Relative entropy derivation of the uncertainty principle with quantum
  side information.
\newblock 2011.
\newblock {arXiv}:1105.4865.

\bibitem{phragment}
E.~Phragm{\'e}n and E.~Lindel{\"o}f.
\newblock Sur une extension d'un principe classique de l'analyse et sur
  quelques propri{\'e}t{\'e}s des fonctions monog{\`e}nes dans le voisinage
  d'un point singulier.
\newblock {\em Acta Math.}, 31:381--406, 1908.

\bibitem{hopf}
E.~Hopf.
\newblock A remark on linear elliptic differential equations of second order.
\newblock {\em Proc.\ Amer. \ Math. \ Soc.}, 34(3):791--793, 1952.

\bibitem{hopf2}
M.~H. Protter and H.~F. Weinberger.
\newblock {\em Maximum principles in differential equations}.
\newblock Springer, New York, 1984.

\bibitem{HadCat}
K.~{\.Z}yczkowski.
\newblock Complex hadamard matrices.
\newblock Online catalogue,\goodbreak {\tt
  http://chaos.if.uj.edu.pl/\~{}karol/hadamard/}, 2003.

\bibitem{rudin}
W.~Rudin.
\newblock {\em Fourier Analysis on Groups}.
\newblock John Wiley \& Sons, 1962.

\bibitem{gallian}
J.~A. Gallian.
\newblock {\em Contemporary abstract algebra}.
\newblock Brooks / Cole, cengage learning, 2006.

\end{thebibliography}

\end{document}